\title{Contraction-free proofs and finitary games for Linear Logic}
\institute{{CNRS, Universit\'e de Nice - Sophia Antipolis}
    \and {CEA - LIST} \and {CNRS, Universit\'e de Savoie}}
\author{{Andr\'e Hirschowitz}\inst{1} \and {Michel Hirschowitz}\inst{2}
  \and {Tom Hirschowitz}\inst{3}\thanks{MoDyFiable and Choco ANR
    projects}} 
\begin{document}
\maketitle

\begin{abstract} 
  In the standard sequent presentations of Girard's Linear
  Logic~\cite{ll} (LL), there are two "non-decreasing" rules, where
  the premises are not smaller than the conclusion, namely the cut and
  the contraction rules. It is a universal concern to eliminate the
  cut rule. We show that, using an admissible modification of the
  tensor rule, contractions can be eliminated, and that cuts can be
  simultaneously limited to a single initial occurrence.

  This view leads to a consistent, but incomplete game model for LL
  with exponentials, which is \emph{finitary}, in the sense that each
  play is finite. The game is based on a set of inference rules which
  does not enjoy cut elimination. Nevertheless, the cut rule is valid
  in the model.
\end{abstract}

\section{Overview}\label{sec:overview}

In an effort to strengthen the connection between
\begin{itemize}
\item categories of sequent calculus proofs modulo cut elimination and
\item categories of winning strategies,
\end{itemize}
we try to 
push forward the idea of \emph{neutrality}, introduced by Delande and
Miller~\cite{Miller08} (see also our~\cite {H3:mall:long}). The idea
there is to understand provability in MALL (Multiplicative Additive
Linear Logic) as a graphical game\footnote{Graphical game here means a
  game where positions are graphs. We do not give any formal
  definition.}, as follows.

Positions of the game are graphs with edges labelled by formulae. The
edges adjacent to each vertex form a sequent, the vertex itself
representing a player trying to prove that sequent.  Each move acts
upon exactly one edge; its restrictions to each end of the edge
correspond to inference rules in the sequent calculus.  Finally,
a formula $A$ is valid when, on the following graph, the left-hand
vertex (\emph{\Proponent{}}~$\Pl$) has a winning strategy against the
right-hand vertex (\emph{Opponent}~$\Op$):
\begin{center}
  \begin{hs}
    \player{l}{0,0} %
    \opponent{r}{2,0} %
    \arrow{l}{r}{$A$} %
  \end{hs}~.
\end{center}

In this paper, we extend this approach to exponential connectives,
for which we recall the four standard rules:
\begin{mathpar}
  \inferrule{\Gam, A}{\Gam, \wn A} 
  \and
  \inferrule{\Gam}{\Gam, \wn A}
  \and
  \inferrule{\Gam, \wn A, \wn A}{\Gam, \wn A}
  \and 
  \inferrule{\wn \Gam, A}{\wn \Gam, \oc A}
\end{mathpar}
(where $\wn \Gam$ denotes a list of $\wn$ formulae).  

Interpreting these rules as moves in a graphical game raises the issue
of infinite plays, for which deciding who wins is problematic.

In this paper, we explore the following way around this problem.  We
first replace the usual (right) tensor rule by an admissible variant
    \begin{mathpar}
      \inferrule{\Gam, \wn \The, A \\ \Del, \wn \The,
        B} {\Gam, \Del, \wn \The, A \otimes B}~\NewTens{},
    \end{mathpar}
which is still somehow decreasing, yielding an equivalent set of
inference rules which we call \LLT{}.  We then show that every sequent
provable in \LLT{} admits a \emph{bounded} proof, i.e., a
contraction-free one where cuts are reduced to at most one initial
occurrence.


Using this, one may devise a graphical game for \LLT{} without
contraction, and derive from it a model of \LLT{} provability, which is
consistent (but incomplete). We sketch this, and explain why the
result is not satisfactory: the model validates $A \impll \oc A$.

We then investigate one possible explanation for this deficiency: the
absence of $n$-ary dereliction. Restoring it leads to a set of
inference rules which we call \LLTN{}, where $\oc A$ has the more
standard decomposition $\bigwith_{n \in \nat} A^{\tens n}$. The notion
of a bounded proof makes sense in \LLTN{}, and we show that bounded
proofs yield a model of \LLT{}, albeit an incomplete one.  We show
that \LLTN{} does not satisfy cut elimination, but does satisfy
admissibility of cut, i.e., if $\Gam, A$ and $\nop{A}, \Del$ have
bounded proofs, then so does $\Gam, \Del$.

\LLTN{} yields a more satisfactory graphical game, in which winning
strategies form a model of \LLTN{}, again consistent.  Analogously to
previous work~\cite{H3:mall:long}, this model is not complete w.r.t.\
\LLTN{} (there is a winning strategy for $\bot \otimes \bot$ for
instance). However, we are confident that our notion of local
strategy, defined in~\cite{H3:mall:long} for the MALL fragment, will
extend smoothly to \LLTN{}, and yield a complete model. 

We thus have (by composition) a graphical game model of \LL{} (which,
again, will not be complete for \LL{}, because \LLTN{} itself is
not). We finally prove that admissibility of cut holds in this model.

We may sum up our models of \LL{} in a diagram %
\begin{center}\begin{diagram}[tight,width=1.2cm,height=1.2cm,midshaft]
    \LL{} & \rCorresponds^{\iso} & \LLT{} & & &
    \pile{\rTo^{\mbox{Theorem~\ref{thm:bounded}}}  \\
      \lNegto_{\mbox{\Proposition{proposition:lltn:incomplete}}}} & & & \LLTN{} 
    & \ \ \ \ \ \ \ \ \ \ \ \ \ {} \\
    & & \dTo<{\mbox{\Proposition{proposition:soundness:naive}}}
    \uNegto>{\mbox{Proposition~\ref{proposition:unsatisfactory}}} & & & & & &
    \uNegto<{\mbox{Proposition~\ref{proposition:game:incomplete}}}
    \dTo>{\mbox{Theorem~\ref{thm:soundness}}} \\
    & & \mbox{Game of Section~\ref{sec:naive}} & & && && \mbox{Game of
      Section~\ref{sec:game}}
  \end{diagram}
\end{center}
where an edge between two notions of validity indicates that the
target is a model of the source, and a negated edge in the other
direction indicates that the model is not complete. We conjecture that
the vertical negated edges may be rectified by passing to local
strategies in the sense of~\cite{H3:mall:long}. However, the
horizontal negated arrow cannot be rectified.

\paragraph{Related work}
The idea of a game semantics for logic goes back at least to
Lorenzen~\cite{Lorenzen61}, and to Blass~\cite{blassgames} for linear
logic.  Blass' approach was extended by Abramsky et
al.~\cite{mllgames}, Hyland and Ong~\cite{compmllwomix}, Abramsky and
Melli\` es~\cite{concurgames,Mellies-AsynchronousGames4A}. Laurent
investigated the polarised case~\cite{phdlaurent}.  Among these
standard approaches, only some handle exponential connectives, and
among them none rule out infinite plays. Instead, they have to resort
to smart criteria to decide who wins these infinite plays.

Delande and Miller's~\cite{Miller08} game, our previous
work~\cite{H3:huet,H3:mall:long}, and Girard's
\emph{ludics}~\cite{locus} of course were a source of inspiration for
this paper. 

Kashima~\cite{Kashima} and Dyckhoff et
al.~\cite{Dyckhoff1992,DBLP:journals/igpl/DyckhoffN01} have already
observed certain forms of contraction elimination in other settings.

\paragraph{Organisation of the paper}
In \Section{sec:bounded:ll}, we prove that any sequent provable in
\LL{} has a bounded proof in \LLT{}. In \Section{sec:naive}, we build
a first graphical game from the rules of \LLT{} minus contraction. We
then prove that it yields a consistent model of \LLT{} provability,
which is however incomplete. In \Section{sec:rules}, we define our set
of inference rules with $n$-ary dereliction, called \LLTN{}, and prove
that it yields a model of \LLT{} provability, which is again
incomplete. We also show that \LLTN{} does not enjoy cut
elimination. We then move on in \Section{sec:game} to define our
graphical game for \LLTN{}, which we prove to yield a consistent model
of \LLTN{} (hence \LLTN{} is itself consistent), which is again
incomplete.  In \Section{sec:cut} we prove the cut rule to be
admissible in the model.

\paragraph{Acknowledgments} We warmly thank Olivier Laurent for his
constant ``Linear Logic Hotline'' and many useful examples and
counterexamples. We also thank Paul-Andr\'e Melli\`es for
encouragements, and Ren\'e David and Karim Nour for a very efficient
consulting session. Finally, we thank the anonymous referees for very
careful reading and helpful suggestions.

\section{Bounded proofs in LL}\label{sec:bounded:ll}
In this section, we prove our result on linear logic, namely that 
any provable sequent admits a bounded proof.

First, we recall LL formulae, defined by the grammar
$$\begin{array}{rc*{4}{c@{\ \ \alt\ \ }}c}
  A, B, C, \ldots & ::= & 
  \zero & \un & A \tens B & 
  A \plus B & \wn A \\
  & \alt & \top & \bot & A \parr B & A \with B & \oc A,
\end{array}$$
and decree that formulae on the first line are positive, the
others being negative. De Morgan duality, or linear negation, $\nop{A}$ 
is defined as usual
(sending a connective
to that vertically opposed to it). Observe that we do not handle 
propositional variables; the sequent calculus part extends easily to 
variables, but for games, this is not yet clear to us.
We use the usual priorities, e.g., $\oc A \impll B \tens C$ means
$(\oc A) \impll (B \tens C)$. Finally, sequents $\Gam, \Del, \ldots$ 
are lists of formulae, and we use $\Gam \vdash \Del$ as a notation
for $\nop{\Gam}, \Del$, when visually easier.

Now, consider the following variant of the tensor rule
    \begin{mathpar}
      \inferrule{\Gam, \wn \The, A \\ \Del, \wn \The,
        B} {\Gam, \Del, \wn \The, A \otimes B}~\NewTens{},
    \end{mathpar}
    which goes back at least to Andreoli~\cite{andreoli:phd}. We
    observe that it is derivable in LL.  Letting \LLT{} denote the set
    of inference rules obtained by replacing the usual tensor rule
    with~\NewTens{}, the following should be clear:
\begin{proposition}
  Provability in \LLT{} is equivalent to provability in \LL{}.
\end{proposition}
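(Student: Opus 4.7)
The plan is to show the two inference systems derive the same sequents by establishing the derivability of each system's tensor rule in the other.

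For the direction \LL{} $\Rightarrow$ \LLT{}, I would simply observe that the standard tensor rule
\begin{mathpar}
  \inferrule{\Gam, A \\ \Del, B}{\Gam, \Del, A \tens B}
\end{mathpar}
is the special case of~\NewTens{} where the shared context $\wn \The$ is empty. Hence every \LL{} derivation is literally an \LLT{} derivation (the other rules being common to both systems).

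For the converse direction \LLT{} $\Rightarrow$ \LL{}, the task is to simulate each instance of~\NewTens{} inside \LL{}. Given premises $\Gam, \wn \The, A$ and $\Del, \wn \The, B$, I would first apply the standard tensor rule to obtain
\begin{mathpar}
  \inferrule{\Gam, \wn \The, A \\ \Del, \wn \The, B}{\Gam, \wn \The, \Del, \wn \The, A \tens B}
\end{mathpar}
and then collapse the two copies of $\wn \The$ into a single one by iterated use of the contraction rule (together with exchanges to bring the duplicated $\wn$-formulae next to each other). This works precisely because every formula in $\wn \The$ is of the form $\wn C$, so contraction applies. Since every other rule of \LLT{} is shared with \LL{}, translating each occurrence of~\NewTens{} in an \LLT{} proof in this way produces an \LL{} proof of the same end-sequent.

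There is no serious obstacle here: the argument is a routine rule-by-rule translation in both directions, and the only subtlety is noting that the shared context in~\NewTens{} consists exclusively of $\wn$-formulae, which is exactly the condition needed for contraction to apply. This is presumably why the authors say ``the following should be clear''.
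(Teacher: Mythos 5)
Your proposal is correct and matches the paper's (unstated but clearly intended) argument: the paper explicitly observes that \NewTens{} is derivable in \LL{} (via the standard tensor rule followed by contractions on the $\wn$-context), and the converse is the empty-$\wn\The$ instance of \NewTens{}. Nothing further is needed.
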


Thanks to rule \NewTens{}, we further have:
\begin{lemma}
  For any formula $A$, the formula $\dup{A} = \oc (\oc A \impll \oc A
  \tens \oc A)$ is provable with neither cuts nor contractions in
  \LLT{}, and furthermore for any $\Gam$, the rule
  \begin{mathpar}
    \inferrule{
      \Gam, \wn \nop{A}, \wn \nop{A}
    }{
      \Gam, \wn \nop{A}, \nop{\dup{A}}
    }~\textsc{Dup}
  \end{mathpar}
  is derivable without cuts nor contractions.
\end{lemma}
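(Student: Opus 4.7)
The plan is to exploit the shared $\wn$-context of $\NewTens{}$ so that a would-be duplication of $\wn \nop{A}$ is absorbed into this shared context rather than requiring a contraction. The key building block, used in both parts, is a short derivation of the sequent $\wn \nop{A}, \oc A$: start from the axiom $\nop{A}, A$, apply dereliction to reach $\wn \nop{A}, A$, and then promote (the context $\wn \nop{A}$ consists of a single $\wn$-formula, so promotion applies). None of these steps involve cut or contraction.

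For the first statement, unfolding the definition of linear implication gives $\dup{A} = \oc(\wn \nop{A} \parr (\oc A \tens \oc A))$. I would apply $\NewTens{}$ to two copies of the above derivation of $\wn \nop{A}, \oc A$, taking the shared $\wn \The$ to be $\wn \nop{A}$ and both outer contexts empty, yielding $\wn \nop{A}, \oc A \tens \oc A$. A single $\parr$ step then produces $\wn \nop{A} \parr (\oc A \tens \oc A)$, and a final promotion (whose context is empty, and hence trivially of $\wn$-formulae) delivers $\dup{A}$. The essential point is that the two copies of $\wn \nop{A}$ that a standard tensor rule would have placed side by side are instead identified by $\NewTens{}$, so no contraction is needed.

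For the derived rule \textsc{Dup}, a direct De Morgan computation gives $\nop{\dup{A}} = \wn(\oc A \tens (\wn \nop{A} \parr \wn \nop{A}))$. From the premise $\Gam, \wn \nop{A}, \wn \nop{A}$, I would first apply $\parr$ to the two $\wn \nop{A}$'s, obtaining $\Gam, \wn \nop{A} \parr \wn \nop{A}$, and then weaken to reach $\Gam, \wn \nop{A}, \wn \nop{A} \parr \wn \nop{A}$. Combining this with the derivation of $\wn \nop{A}, \oc A$ via $\NewTens{}$, with shared context $\wn \The = \wn \nop{A}$ and outer contexts $\emptyset$ and $\Gam$ respectively, gives $\Gam, \wn \nop{A}, \oc A \tens (\wn \nop{A} \parr \wn \nop{A})$. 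A final dereliction produces the conclusion $\Gam, \wn \nop{A}, \nop{\dup{A}}$. Only axiom, dereliction, promotion, weakening, $\parr$ and $\NewTens{}$ are used, so neither cut nor contraction occurs.

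There is no serious obstacle: the derivations are short and the central observation is simply that placing $\wn \nop{A}$ in the shared context $\wn \The$ of $\NewTens{}$ plays the role of a contraction. The only minor care needed is in the De Morgan unfolding of $\nop{\dup{A}}$ and in checking that promotion applies whenever invoked (its context must be all-$\wn$), which is immediate in both uses above.
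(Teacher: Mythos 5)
Your proof is correct and follows essentially the same route as the paper: prove $\wn\nop{A}, \oc A$ by axiom, dereliction and promotion, combine two copies via \NewTens{} with shared context $\wn\nop{A}$ to get $\wn\nop{A}, \oc A \tens \oc A$, then $\parr$ and promote for the first part; and for \textsc{Dup}, $\parr$ the two $\wn\nop{A}$'s, tensor with $\wn\nop{A}, \oc A$, and derelict. The only cosmetic difference is your extra weakening before the tensor step in \textsc{Dup}, which the paper avoids by letting the $\wn\nop{A}$ of the other premise serve as an ordinary context formula; both variants use neither cut nor contraction.
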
 (Here, $A \impll B$ denotes $\nop{A} \parr B$, as usual.)
\begin{proof}
Here is a proof of $\dup{A}$:
  \begin{mathpar}
\inferrule*{
  \inferrule*{
    \inferrule*{
      \wn A^\bot, \oc A \\
      \wn A^\bot, \oc A
    }{
      \wn A^\bot, \oc A \tens \oc A
    }
  }{
    \oc A \impll (\oc A \tens \oc A)
  }
}{ \oc (\oc A \impll (\oc A \tens \oc A)). }
\end{mathpar}
  Here is a derivation of \textsc{Dup}:
  \begin{mathpar}
    \inferrule*{ %
    \inferrule*{ %
      \inferrule*{\Gam, \wn A, \wn A %
      }{\Gam, \wn A \parr \wn A} \\ %
      \inferrule*{ }{ %
        \wn A, \oc A^\bot} %
    }{ \Gam, \wn A, \oc A^\bot \tens (\wn A \parr \wn A) } %
  }{ \Gam, \wn A, \wn (\oc A^\bot \tens (\wn A \parr \wn A)). }
\end{mathpar}
\qed\end{proof}
We call \emph{duplicators} the formulae of the shape $\dup{A}$.  Using
this, we construct \emph{bounded} proofs for provable sequents in
\LLT{}, in the following sense.

\begin{definition}
  A proof in \LLT{} is \emph{cc-free} when it is cut-free and
  contraction-free.  It is \emph{bounded} when it is either cc-free,
  or of the form $$ \inferrule{
    \inferrule{\pi_1}{A} \\
    \inferrule{\pi_2}{A^\bot, \Gam} \\
  }{\Gam}$$ with $\pi_1$ and $\pi_2$ cc-free.
\end{definition}

\begin{lemma}\label{lemma:cc}
  For any sequent $\Gam$ provable in \LLT{}, there is a list 
  $\dups$ of duplicators such that $\Gam, \nop{\dups}$ admits a
  cc-free proof.
\end{lemma}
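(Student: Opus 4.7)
The plan is to invoke cut elimination in \LL{} (equivalently, in \LLT{} by the preceding proposition) to reduce to a cut-free proof of $\Gam$, and then to eliminate contractions by induction on that cut-free proof, paying for each contraction with one appended duplicator. The target of the induction is the statement of the lemma restricted to cut-free proofs: for any cut-free proof of some $\Gam'$, one exhibits a list $\dups$ of duplicators such that $\Gam', \nop{\dups}$ admits a cc-free proof. Throughout, the decisive property will be that every $\nop{\dup{A}}$ is of the form $\wn(\ldots)$, hence freely weakenable into any sequent and, crucially, admissible inside the shared $\wn \The$ of the rule \NewTens{}.

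The easy cases will be dispatched by direct application of the induction hypothesis. Axioms and units need no duplicators. For each unary logical rule ($\parr$, $\plus$-injection, $\bot$, dereliction, and $\wn$-weakening) one invokes the induction hypothesis on the premise and re-applies the same rule. For $\with$, one concatenates the two duplicator lists returned by the premises and $\wn$-weakens each sub-proof with the missing duals before applying the rule. For $\oc$-promotion the side condition is preserved because the added $\nop{\dups}$ consists only of $\wn$-formulas, so the rule fires directly.

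The two substantial cases are contraction and tensor. For a contraction on $\wn B$, the induction hypothesis produces a cc-free proof of $\Gam', \wn B, \wn B, \nop{\dups}$, to which one applies the derived rule \textsc{Dup} of the preceding lemma (instantiated with $A = \nop B$), yielding a cc-free proof of $\Gam', \wn B, \nop{\dup{\nop B}}, \nop{\dups}$; the output list is $\dups$ extended by $\dup{\nop B}$. For tensor, the two premises yield cc-free proofs of $\Gam_1, \wn \The, A, \nop{\dups_1}$ and $\Del, \wn \The, B, \nop{\dups_2}$; after $\wn$-weakening each with the other's duplicator duals, both sub-proofs share the extended $\wn$-context $\wn \The, \nop{\dups_1}, \nop{\dups_2}$, and \NewTens{} then delivers the desired conclusion. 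This is precisely where the switch from standard tensor to \NewTens{} is forced: the standard rule would demand splitting the accumulated duplicator duals between the premises and would thereby break the invariant. Recognising that \NewTens{} is designed exactly so that all $\wn$-formulas remain accessible on both sides of a multiplicative split is the main conceptual obstacle; once identified, the induction is routine.
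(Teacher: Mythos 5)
Your proof is correct and follows essentially the same route as the paper: eliminate cuts first, then induct on the cut-free proof, trading each contraction for a duplicator via \textsc{Dup} and using the fact that the duals of duplicators are $\wn$-formulae to thread them through the remaining rules. You spell out the tensor case (where \NewTens{} absorbs the accumulated duplicator duals into the shared $\wn\The$) more explicitly than the paper, which only remarks that the extra $\wn$-hypotheses ``do not hinder the derivation too much,'' but the argument is the same.
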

(Here $\nop{\dups}$ denotes the list of duals of formulae in $\dups$.)
\begin{proof}
  By cut elimination in \LL{}, we may assume the given proof cut-free.
  The construction is then a mere induction on it, using the fact that
  duplicators are $\oc$ formulae. The non trivial case is contraction:
  from a given proof
  \begin{mathpar}
    \inferrule*{
      \inferrule*{
        \pi
      }{
        \Gam, \wn A, \wn A 
      }
      }{
        \Gam, \wn A
      }
  \end{mathpar}
  one obtains by induction hypothesis a proof $\pi'$ of $\Gam, \wn A,
  \wn A, \nop{\dups}$, for some list of duplicators $\dups$. One then
  derives (up to exchange):
  \begin{mathpar}
      \inferrule*[Right=Dup]{
        \inferrule*{\pi'}{\Gam, \wn A, \wn A, \nop{\dups}}
      }{
                \Gam, \wn A, \nop{\dups}, \nop{\dup{A}}.
      }
  \end{mathpar}
  (This does not use contraction or cut.)  For other cases, one just
  shows that a list of new hypotheses starting with $\wn$ do not
  hinder the derivation too much. This only adds weakenings and
  \textsc{Exchange} rules, but no cuts nor contractions.
\qed\end{proof}

Finally, we conclude:
\begin{theorem}\label{thm:bounded:ll}
   In \LLT{}, each provable sequent admits a bounded proof.
\end{theorem}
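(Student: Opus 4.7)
The plan is to upgrade the cc-free proof of $\Gamma, \nop{\dups}$ produced by \Lemma{lemma:cc} into a proof of $\Gamma$ using a single cut, whose two premises are cc-free. The key obstacle is that the list $\dups$ of duplicators may contain several formulae, while the definition of a bounded proof allows only \emph{one} cut at the root. I would therefore bundle all duplicators into a single tensor, and cut on this bundle.

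More concretely, let $\dups = D_1, \ldots, D_n$ be the list of duplicators supplied by \Lemma{lemma:cc}, so that there is a cc-free proof $\pi$ of $\Gamma, \nop{D_1}, \ldots, \nop{D_n}$. If $n = 0$, then $\pi$ is already a cc-free (hence bounded) proof of $\Gamma$. Otherwise, set $D = D_1 \tens \cdots \tens D_n$ (associated in any fixed way). I would then build the bounded proof as
\begin{mathpar}
  \inferrule*{\inferrule*{\pi_1}{D} \\ \inferrule*{\pi_2}{\nop{D}, \Gamma}}{\Gamma}
\end{mathpar}
where $\pi_1$ and $\pi_2$ are constructed as follows.

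For $\pi_1$, the previous lemma provides a cc-free proof of each $D_i$ individually; I would combine them by $n-1$ applications of \NewTens{}, with all side contexts and $\wn\The$ empty (so that \NewTens{} specialises to the ordinary tensor rule). This introduces no cut and no contraction, so $\pi_1$ is cc-free. For $\pi_2$, starting from $\pi$, I would apply $n-1$ \textsc{Par} rules and any needed \textsc{Exchange} to pack $\nop{D_1}, \ldots, \nop{D_n}$ into the single formula $\nop{D_1} \parr \cdots \parr \nop{D_n} = \nop{D}$; again this adds no cuts or contractions, so $\pi_2$ is cc-free.

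The one potentially delicate point is making sure that \NewTens{} really suffices to combine the $D_i$ with cc-free premises; since the general form of \NewTens{} specialises, when $\Gam$, $\Del$ and $\wn\The$ are all empty, to the plain rule inferring $A \tens B$ from $A$ and $B$, this is immediate. The resulting proof has exactly one cut, at the root, on the formula $D$, and both premises are cc-free, so it is bounded in the sense of the definition, concluding the proof. \qed
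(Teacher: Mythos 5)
Your proposal is correct and matches the paper's own proof essentially step for step: the paper likewise tensors the duplicators $\dup{1}, \ldots, \dup{n}$ into a single formula proved cc-free from the individual cc-free proofs of each $\dup{i}$, pars their duals onto $\pi$ to obtain the second cc-free premise, and concludes with a single root cut. Your explicit treatment of the $n=0$ case and the remark that \NewTens{} with empty contexts specialises to the plain tensor rule are minor (correct) elaborations of the same argument.
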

\begin{proof}
  Consider a provable sequent $\Gam$ in \LLT{}. By
  Lemma~\ref{lemma:cc}, $\Gam$ has a cc-free \LLT{} proof $\pi$ of
  $\Gam, \nop{\dups}$, for some $\dups = (\dup{1}, \ldots, \dup{n})$. So, the
  proof
  \begin{mathpar}
    \inferrule*{
      \inferrule*{
        \dup{1} \\
        \ldots \\
        \dup{n}
      }{ {}\tens{(\dups)}
      } \\
      \inferrule*{
        \inferrule*{\pi
        }{\Gam, \nop{\dups}
        } \\
      }{
        {}\parr{\nop{(\dups)}}, \Gam
      } 
    }{
      \Gam
    }
  \end{mathpar}
  of $\Gam$ is bounded.
\qed\end{proof}

\section{A naive game for \LLT{}}\label{sec:naive}
We have shown that we can dispense with the contraction rule of \LL{},
at the cost of a single initial cut and a modified tensor rule.

Starting from our previous game~\cite{H3:mall:long}, we now build an
extension of it to these rules.

\subsection{Positions }\label{subsec:positions}
As sketched above, the idea of positions in our game is as follows:
positions are graphs, whose edges are labelled by formulae. We want to
understand the neighborhood of a vertex, i.e., its adjacent edges, as
its sequent.  For instance, the vertex $\bullet$ in the graph
\begin{center}
  \begin{hs}
    \sequent{c}{0,0} %
    \player{l}{-1,0} %
    \player{ur}{1,.5} %
    \player{dr}{1,-.5} %
    \arrow{l}{c}{$A$} %
    \arrow{c}{ur}{$A$} %
    \arrowd{c}{dr}{$B$} %
  \end{hs}
\end{center}
sees the sequent $A, \nop{A}, B$, or equivalently $A \vdash A,
B$.

The use of graphs may be understood by seeing the cut rule
\begin{mathpar}
  \inferrule{\Gam \vdash A \\ A \vdash \Del}{\Gam \vdash \Del}
\end{mathpar}
as a move
\begin{center}
  \begin{tabular}[t]{c|c}
    \multicolumn{1}{c}{From} & \multicolumn{1}{c}{To} \\ \hline
    \begin{hs}
      \sequent{c}{0,0} %
      \subtreerotwith{c}{90}{$\Gam$}{left} %
      \subtreerotwith{c}{-90}{$\Del$}{right} %
    \end{hs}
    &
    \begin{hs}
      \sequent{c}{0,0} %
      \sequent{r}{2,0} %
      \arrow{c}{r}{$A$} %
      \subtreerotwith{c}{90}{$\Gam$}{left} %
      \subtreerotwith{r}{-90}{$\Del$}{right} %
    \end{hs} 
  \end{tabular}
\end{center}
between graphs. Starting from the single vertex graph, after a finite
number of cuts, one may reach any connected and acyclic graph (in the
undirected sense). Hence our definition:
\begin{definition}
  A \emph{position} in our game is a \emph{punctured, directed,
    signed, normally labelled tree}.
\end{definition}
By \emph{punctured}, we mean that a vertex is selected; we say that
this vertex \emph{holds the token}. By \emph{directed} we mean that
edges are ordered pairs of vertices. By \emph{signed}, we mean that a
function from vertices to $\ens{O, P}$ is selected. Vertices labelled
$O$, called \emph{opponents}, are pictured by $\Op$, others, called
\emph{\proponents{}}, are pictured by $\Pl$. By \emph{labelled}, we mean
that a function from edges to formulae is selected, and by
\emph{normally labelled}, we mean that the selected formulae are
positive.  Finally by \emph{tree}, we mean that the underlying
undirected graph is a tree.  

We will also consider non-normally labelled trees, as a notation for
the position obtained by normalisation: for each edge labelled with a
negative formula, reverse altogether the edge and the formula.

There is an obvious notion of isomorphism, which turns positions into
a groupoid. Observe that positions may have symmetries
(automorphisms).

For each vertex, its adjacent edges determine (up to reordering) its
\emph{sequent}.


\subsection{Moves}\label{subsec:moves}

We now define our moves in Figure~\ref{fig:moves}. Moves are the edges
of a (huge) graph with positions as vertices, or equivalently a
reduction relation on positions.

Moves go left to right (in each row), and the vertex $v$ holding the
token, pictured as framed, \emph{plays} the move, or is \emph{active}.
The other vertex $v'$ is \emph{passive}.

The move is defined regardless of $v$ and $v'$ being \proponents{} or
an opponents. Thus, to avoid notational clutter, we define moves as a
relation between graphs, the polarity $\ens{O, P}$ being inherited
from the initial position. In the figure, $\bullet$ thus represents
vertices regardless of their polarity.

On the left, the \emph{broken} edge or formula is shown. It determines
two subtrees, which are pictured as triangles, and named to emphasise
the analogy with inference rules.  In the initial position of the
tensor move (third line), there are several triangles connected to the
left-hand vertex $v$: they denote subtrees which have $v$ as only
vertex in common.

\begin{figure*}[t]
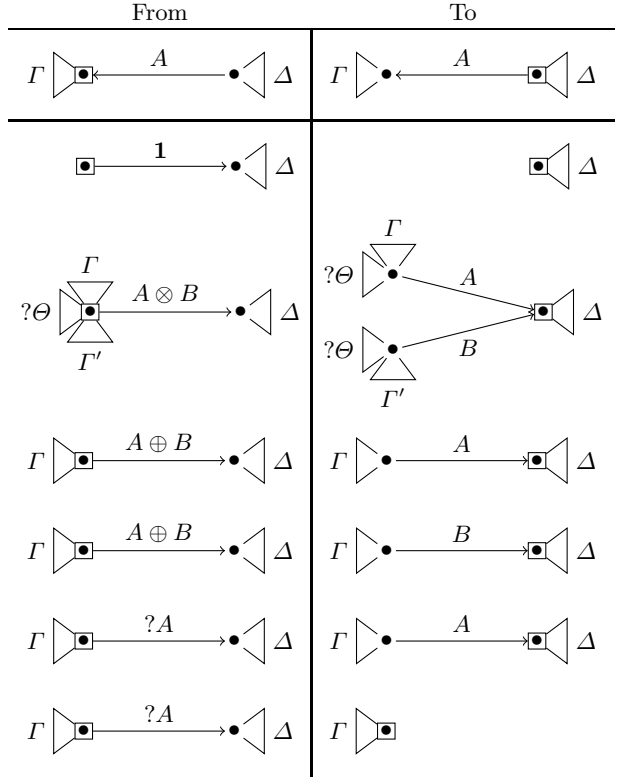

  \centering
  \begin{tabular}[t]{c|c}
    \multicolumn{1}{c}{From} & \multicolumn{1}{c}{To} \\ \hline
    \begin{hs}
      \asequent{c}{0,0} %
      \sequent{r}{2,0} %
      \arrow{r}{c}{$A$} %
      \subtreerotwith{c}{90}{$\Gam$}{left} %
      \subtreerotwith{r}{-90}{$\Del$}{right} %
    \end{hs}
    &
    \begin{hs}
      \sequent{c}{0,0} %
      \asequent{r}{2,0} %
      \arrow{r}{c}{$A$} %
      \subtreerotwith{c}{90}{$\Gam$}{left} %
      \subtreerotwith{r}{-90}{$\Del$}{right} %
    \end{hs} \\ \hline
    \begin{hs}
      \node (phant) at (-.8,0) {} ; %
      \asequent{c}{0,0} %
      \sequent{r}{2,0} %
      \arrow{c}{r}{$\un$} %
      \subtreerotwith{r}{-90}{$\Del$}{right} %
    \end{hs}
    &
    \begin{hs}
      \node (c) at (-.8,0) {}; %
      \asequent{r}{2,0} %
      \subtreerotwith{r}{-90}{$\Del$}{right} %
    \end{hs}
    \\
    \begin{hs}
      \asequent{c}{0,0} %
      \sequent{r}{2,0} %
      \subtreerotwith{c}{0}{$\Gam$}{above} %
      \subtreerotwith{c}{90}{$\wn \The$}{left} %
      \subtreerotwith{c}{180}{$\Gam'$}{below} %
      \subtreerotwith{r}{-90}{$\Del$}{right} %
      \arrow{c}{r}{$A \tens B$} %
    \end{hs}
    &
    \begin{hs}
      \sequent{u}{0,0.5} %
      \sequent{d}{0,-0.5} %
      \asequent{r}{2,0} %
      \subtreerotwith{u}{0}{$\Gam$}{above} %
      \subtreerotwith{u}{90}{$\wn \The$}{left} %
      \subtreerotwith{d}{90}{$\wn \The$}{left} %
      \subtreerotwith{d}{180}{$\Gam'$}{below} %
      \subtreerotwith{r}{-90}{$\Del$}{right} %
      \arrow{u}{r}{$A$} %
      \arrowd{d}{r}{$B$} %
    \end{hs}
    \\
    \begin{hs}
      \asequent{c}{0,0} %
      \sequent{r}{2,0} %
      \arrow{c}{r}{$A \plus B$} %
      \subtreerotwith{c}{90}{$\Gam$}{left} %
      \subtreerotwith{r}{-90}{$\Del$}{right} %
    \end{hs}
    &
    \begin{hs}
      \sequent{c}{0,0} %
      \asequent{r}{2,0} %
      \arrow{c}{r}{$A$} %
      \subtreerotwith{c}{90}{$\Gam$}{left} %
      \subtreerotwith{r}{-90}{$\Del$}{right} %
    \end{hs}
    \\
    \begin{hs}
      \asequent{c}{0,0} %
      \sequent{r}{2,0} %
      \arrow{c}{r}{$A \plus B$} %
      \subtreerotwith{c}{90}{$\Gam$}{left} %
      \subtreerotwith{r}{-90}{$\Del$}{right} %
    \end{hs}
    &
    \begin{hs}
      \sequent{c}{0,0} %
      \asequent{r}{2,0} %
      \arrow{c}{r}{$B$} %
      \subtreerotwith{c}{90}{$\Gam$}{left} %
      \subtreerotwith{r}{-90}{$\Del$}{right} %
    \end{hs}
    \\
    \begin{hs}
      \asequent{c}{0,0} %
      \sequent{r}{2,0} %
      \arrow{c}{r}{$\wn A$} %
      \subtreerotwith{c}{90}{$\Gam$}{left} %
      \subtreerotwith{r}{-90}{$\Del$}{right} %
    \end{hs}
    &
    \begin{hs}
      \sequent{c}{0,0} %
      \asequent{r}{2,0} %
      \arrow{c}{r}{$A$} %
      \subtreerotwith{c}{90}{$\Gam$}{left} %
      \subtreerotwith{r}{-90}{$\Del$}{right} %
    \end{hs}
    \\
    \begin{hs}
      \asequent{c}{0,0} %
      \sequent{r}{2,0} %
      \arrow{c}{r}{$\wn A$} %
      \subtreerotwith{c}{90}{$\Gam$}{left} %
      \subtreerotwith{r}{-90}{$\Del$}{right} %
    \end{hs}
    &
    \begin{hs}
      \node (r) at (2.8,0) {} ; %
      \asequent{c}{0,0} %
      \subtreerotwith{c}{90}{$\Gam$}{left} %
    \end{hs}
    %
  \end{tabular}
  \caption{Naive moves}
\label{fig:moves}
\end{figure*}

On the first row, we have \emph{negative} moves, where $A$ is required
to be positive. The active vertex $v$ passes the token along an input
edge. In some sense, $v$ has a negative formula, and asks the vertex
at the other end of this formula to break it. Thanks to acyclicity of
our positions, the token will come back, if ever, through the same
edge, breaking it at the same time. In the case of the negative
formula $\top$, the token will never come back, since there is no
right introduction rule for $\zero$. In this sense, the corresponding
negative move is winning.

Then we have positive moves. The active vertex may act on each of its
output formulae, except the $\zero$'s, and conditionally for the
$\un$'s.  The action is simple on formulae of the form $A \oplus B$:
it just changes the formula either into $A$ or into $B$. On $\wn A$,
the vertex may change it to $A$, or brutally erase $\wn A$, together
with its whole subtree. The action is pretty simple on the formula
$\un$: if (and only if) that edge is its sole adjacent edge, the
vertex may pass the token along that formula, and delete itself and
the edge from the tree.

It remains to explain the positive move in the case of a tensor
formula. This is the most complex move.  Here our active vertex, say
$v$, has an output edge $e$ labelled with $A \otimes B$, ending at the
neighbour vertex $v'$. When deleting $e$, we get a tree $\Del$ on the
$v'$ side, and another tree on the $v$ side, which $v$ has to split
into three subtrees $\Gam, \Gam'$, and $\wn \The$, sharing only their
root $v$. Furthermore, the subtree $\wn \The$ must exclusively consist
of outputs of the form $\wn C$. Then our vertex $v$ becomes two
vertices $v_1$ and $v_2$, where $v_1$ inherits $A$ (linked to $v'$),
$\Gam$ and a copy of $\wn \The$, while $v_2$ inherits $B$ (also linked
to $v'$), $\Gam'$ and a second copy of $\wn \The$. On the other side,
$v'$ keeps $\Del$ unchanged, and inherits both input formulae $A$ and
$B$, plus the token.

Our formalism allows to define cut moves and contraction moves (e.g.,
by changing $\wn A$ to $\wn A \parr \wn A$), but we leave them out of
the picture.

\subsection{Plays, strategies, validity}
We now define plays and (winning) strategies, and show that the latter yield 
a consistent model of \LLT{} provability (which is incomplete).

Having defined moves, plays on a position $U$ follow: they are
(directed, possibly countable) paths from $U$ in the graph of
positions and moves, or equivalently reduction sequences.  They happen
to be finite:
\begin{lemma}\label{lemma:finite:naive}
  In the above graph of positions and moves, there is no infinite path
  (i.e., in our game, all plays are finite).
\end{lemma}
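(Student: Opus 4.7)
The plan is to decompose each play into its alternating positive and negative segments and show that each contributes only finitely many moves.

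For the negative moves, the key observation is that a negative move passes the token from $v$ to $w$ along an input edge of $v$ (directed $w \to v$), and the same edge, being $w$'s output rather than input, cannot be used by $w$ to pass the token back via another negative move. By acyclicity of the tree, consecutive negative moves therefore trace a simple path in the graph. Since the graph is unchanged between two positive moves, the number of negative moves in any such interval is bounded by $|V(G)|$.

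For the positive moves, I would define a well-founded ordinal measure $\mu(G)$ on positions that strictly decreases with each such move. The cases $\un$, $\plus$, dereliction, and weakening are routine: each either replaces an edge label by a strict subformula or removes edges and subtrees outright. The main obstacle is the tensor move, which duplicates the entire $\wn\The$ subtree and so does not decrease any naive measure of total formula size.

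The resolution is an ordinal measure tailored to the hereditary structure of formulae. Since all labels appearing in any reachable position lie in the finite set of subformulae of the initial labels, one can assign each formula $A$ an ordinal weight $\alpha(A)$ designed so that $\alpha(A \tens B)$ strictly dominates $\alpha(A) + \alpha(B)$ plus the total $\alpha$-weight of any $\wn\The$ subtree that can possibly arise---typically via ordinal exponentiation at the $\tens$ connective, so that consuming one $\tens$ always pays for the duplication of finitely many $\wn$-formulae drawn from the fixed subformula set. Setting $\mu(G) = \bigoplus_e \alpha(A_e)$ (natural ordinal sum over edge labels), each positive move strictly decreases $\mu$, the verification for the tensor case being the technical heart of the proof. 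An infinite play would then contain infinitely many positive moves and so produce an infinite descending chain in a well-founded order, contradiction.
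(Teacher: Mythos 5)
Your decomposition into negative and positive segments is sound, and the first half of your argument works: a negative move sends the token to the tail of an input edge, that edge cannot be re-traversed by a negative move from the other side, and acyclicity of the tree then forces any block of consecutive negative moves to follow a simple path, hence to have length at most the number of vertices. So an infinite play would indeed have to contain infinitely many positive moves, and everything hinges on the tensor case, which you correctly identify as the technical heart --- but which your measure cannot handle. The tensor move consumes one edge labelled $A \tens B$ and duplicates the \emph{entire subtree} $\wn \The$ hanging off the active vertex, not merely finitely many $\wn$-formulae from a fixed subformula set. That subtree is an arbitrary portion of the current position: its vertices, edges and labels are unrelated to $A \tens B$, and its total weight can be made as large as you please (for instance by earlier tensor moves that have already inflated it) while $A$ and $B$ stay atomic. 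Consequently no static assignment $\alpha$ on formulae can make $\bigoplus_e \alpha(A_e)$ decrease: the credit $\alpha(A\tens B)$ available for the move is fixed once and for all by the formula, whereas the debt incurred by the duplication depends on the position and is unbounded. This is the same obstruction that defeats naive decreasing measures for $\beta$-reduction in typed $\lambda$-calculi, and it is why the paper does not argue this way.

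The paper's own route is different in both respects: it derives Lemma~\ref{lemma:finite:naive} by simulating every play of the naive game inside the game for \LLTN{} (augmented with an extra ``exotic'' move), and it proves finiteness of the latter (Theorem~\ref{thm:finite}) not by a decreasing measure but by a strong-normalization argument in the style of David and Nour: one shows that if a position $U$ and positions $V_1,\dots,V_n$ are strongly normalizable, then so is the position obtained by grafting the $V_i$ onto $U$ along edges labelled $A$, by a lexicographic induction on the formula $A$, on $U$ ordered by reachability among sn positions, and on $n$. The duplication performed by the tensor move is absorbed there by applying the induction hypothesis at the strictly smaller formulae $\nop{B}$ and $\nop{C}$, i.e.\ by a compositionality (substitution) lemma rather than by paying for the copies out of a weight on $A \tens B$. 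To repair your proof you would need to replace the ordinal measure for the positive moves by such an induction, or at least by a measure that depends on the whole position and is shown stable under grafting --- at which point you have essentially reconstructed the paper's argument.
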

\begin{proof}
  This is a consequence of our later Theorem~\ref{thm:finite}.
\qed\end{proof}

For a position $U$, we then define a \emph{strategy} to be a \emph
{welcoming}, prefix-closed set of plays from $U$, containing the empty
play, and stable under isomorphism. The latter means that a strategy
must contain two isomorphic plays
    \begin{diagram}
      U & \rTo & U_1 & \rTo & U_2 & \rTo & \ldots \\
      & \rdTo & \dTo~{\iso} & & \dTo~{\iso} & & \ldots \\
      & & V_1 & \rTo & V_2 & \rTo & \ldots
    \end{diagram}
at the same time.
By \emph{welcoming}, we mean that the
strategy accepts any move from opponents. In a strategy $\strat$, a
play is \emph{maximal} when $\strat$ contains no extension of it.
\begin{definition}
  A strategy is \emph{winning} if, in all its maximal positions, the
  token is held by an opponent.
\end{definition}

We then define validity. This is a bit more delicate, because the game
does not feature contraction, so for proofs to yield winning
strategies, we account for the needed initial cut:
\begin{definition}\label{def:valid}
  A formula $A$ is \GLLT{}-\emph{valid} (G stands for ``Game'') iff
  there exists a tree $U$ such that for all trees $V$, the vertices of
  $U$ have a winning strategy against those of $V$ in the graph
  \begin{center}
      \begin{hs}
    \player{l}{0,0} %
    \opponent{r}{2,0} %
    \subtreerotwith{l}{90}{$U$}{left} %
    \subtreerotwith{r}{-90}{$V$}{right} %
    \arrow{l}{r}{$A$} %
  \end{hs}
  \end{center} wherever the
  token is placed initially.
\end{definition}

Using the same methods as in Section~\ref{sec:logic} below, one may show
that \GLLT{}-validity yields a consistent model of \LLT{}:
\begin{theorem}
  \GLLT{}-validity is consistent, i.e., two dual formulae are not both
  \GLLT{}-valid.
\end{theorem}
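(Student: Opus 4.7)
My plan is to assume for contradiction that both $A$ and $\nop{A}$ are \GLLT{}-valid, witnessed respectively by trees $U$ and $U'$, and to make the two resulting winning strategies collide on one common game. Concretely, let $W$ be the tree obtained by joining $U$ and $U'$ via a single $A$-labelled edge. After the normalisation convention of Section~\ref{subsec:positions}, $W$ serves both as a legal choice of~$V$ against which the validity hypothesis for $A$ applied to $U$ delivers a winning strategy, and, with signs swapped, as a legal choice of~$V'$ for the $\nop{A}$ hypothesis applied to $U'$. Fixing the initial token at any vertex of $W$, these hypotheses produce a winning strategy $\sigma$ in the signing $S_1$ in which $U$-vertices are proponents and $U'$-vertices are opponents, and a winning strategy $\tau$ in the opposite signing $S_2$.

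The core step is a greedy construction of a chain of plays $p_0 = \epsilon \subsetneq p_1 \subsetneq \cdots$ all lying in $\sigma \cap \tau$. At stage $n$, I inspect the vertex $v$ carrying the token at the end of $p_n$. If $v$ lies in $U$, then $v$ is a proponent in $S_1$ but an opponent in $S_2$; I extend $p_n$ by any extension available in $\sigma$ and observe that this extension automatically lies in $\tau$ as well, since $\tau$ is welcoming in $S_2$ and the move chosen is, by definition of $S_2$, an opponent move there. If $v$ lies in $U'$, I extend via $\tau$ and conclude membership in $\sigma$ by the dual argument. This cross-strategy extension is legitimate precisely because, as stressed in Section~\ref{subsec:moves}, the moves of the game are defined independently of the signing; only the winning condition is polarity-sensitive.

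By Lemma~\ref{lemma:finite:naive}, the chain must terminate at some $p_n$, and termination means exactly that the designated strategy admits no extension of $p_n$. Hence $p_n$ is maximal either in $\sigma$ (if $v \in U$) or in $\tau$ (if $v \in U'$). Applying the winning condition of the relevant strategy then yields the desired contradiction: in the first case, the token should rest on an opponent in $S_1$, i.e., on a vertex of $U'$, which clashes with $v \in U$; the second case is symmetric.

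The main obstacle I foresee is entirely a bookkeeping one, namely confirming that the two signed positions arising from the two validity hypotheses really do share the same underlying unsigned position after normalisation, so that $\sigma$ and $\tau$ can be intersected on the nose. Given the explicit polarity-independence of moves in Section~\ref{subsec:moves} and the obvious compatibility of the ``wherever the token is placed initially'' clause of Definition~\ref{def:valid} with choosing the same initial vertex for both hypotheses, I expect this step to be routine rather than genuinely difficult.
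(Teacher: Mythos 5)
Your proposal is correct and follows essentially the same route as the paper, which proves this statement by deferring to the argument for Theorem~\ref{thm:consistency} in Section~\ref{sec:logic}: intersect the two winning strategies on the common underlying position, use finiteness of plays to extract a maximal play of the intersection, and use welcomingness to conclude that this play is maximal in whichever strategy owns the final token-holder, contradicting that strategy's winning condition. Your greedy chain is just an explicit unfolding of ``take a maximal play in $\sigma \cap \tau$,'' and your care about the shared unsigned position and initial token placement fills in details the paper leaves implicit.
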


\begin{proposition}\label{proposition:soundness:naive}
  Any formula provable in \LLT{} is \GLLT{}-valid.
\end{proposition}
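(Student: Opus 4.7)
The plan is to combine Theorem~\ref{thm:bounded:ll} with a direct translation of cc-free proofs into winning strategies.

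Given a formula $A$ provable in \LLT{}, Theorem~\ref{thm:bounded:ll} yields a bounded proof, which is either cc-free or has a single initial cut on some formula $C$. In the cc-free case, take $U$ to be the trivial tree reduced to the \Proponent{} vertex $\Pl$. In the cut case, with cc-free subproofs $\pi_1$ of $C$ and $\pi_2$ of $\nop{C}, A$, take $U$ to consist of a second \Proponent{} vertex $\Pl'$ linked to $\Pl$ by a single edge labelled and oriented so that $\Pl'$'s sequent is $C$ and $\Pl$ sees $\nop{C}$ in addition to $A$. This externalises the initial cut into the game position, and reduces both cases to a common statement: given a cc-free proof of the sequent at each \Proponent{} vertex, build a winning strategy.

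The heart of the argument is the following lemma, proved by induction on a cc-free proof $\pi$ of $\Gam$: in any position in which some \Proponent{} vertex $v$ has sequent $\Gam$, the proof $\pi$ drives a winning strategy at $v$, and the strategies so obtained for the various \Proponent{} vertices combine into a welcoming, isomorphism-stable strategy. Each inference rule of the cc-free fragment is simulated by one of the moves in Figure~\ref{fig:moves}: the positive rules (\NewTens{}, $\plus$, $\un$, $\wn$-dereliction, $\wn$-weakening) correspond directly to their namesake positive moves; in particular, rule \NewTens{} matches the tensor move exactly, the $\wn\The$ context being faithfully replicated on the two premise vertices, which then inherit the corresponding sub-proofs. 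The negative rules ($\parr$, $\with$, $\bot$, $\top$, promotion) all collapse to a single negative move that passes the token along the relevant input edge; whenever the token returns, the opponent will have effectively selected one of the premises of the corresponding invertible negative rule, and the induction proceeds on that branch. Axioms are handled in copycat fashion, by a negative move handing the token to the neighbour at the other end of the paired formula.

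The main technical obstacle lies in the inductive step for negative rules: one must check that however the opponent behaves while holding the token, whenever it returns through a given edge, the corresponding fragment of $v$'s sequent matches a premise of the invertible negative rule being simulated, so that induction applies. This rests on the move shapes in Figure~\ref{fig:moves} together with the acyclicity of positions, which forces the only return path through a given edge to be via a positive move on its label. Finiteness of each play is then given by Lemma~\ref{lemma:finite:naive}, and winning follows because $\pi$ dictates a move at $v$ as long as $v$'s local proof script is not exhausted: at a maximal position the token cannot be at $v$, and since the same applies to every \Proponent{} vertex, the token must be held by an opponent.
\qed
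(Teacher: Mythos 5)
Your proposal is correct and follows essentially the same route as the paper, which for this proposition simply defers to the argument of Section~\ref{sec:logic}: use the bounded proof from Theorem~\ref{thm:bounded:ll} to pick the base tree $U$ (a single \Proponent{} vertex, or two \Proponent{} vertices joined by an edge labelled with the cut formula), then show that cut-free proofs attached to the \Proponent{} vertices drive a total, hence winning, strategy, with positive rules matched by positive moves and negative rules handled by passing the token and observing that the sequent is only modified into a premise of the corresponding rule (the paper's ``anodyne'' modifications). The only cosmetic difference is that you organise this as an induction on the cc-free proof where the paper builds an explicit graph of proof-decorated positions, but the content is the same.
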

The proofs are as in Section~\ref{sec:logic}, so we do not repeat them
here.

We consider this game unsatisfactory, because its
exponentials are terribly non standard:
\begin{proposition}\label{proposition:unsatisfactory}
  The formula $A \impll \oc A$ is \GLLT{}-valid.
\end{proposition}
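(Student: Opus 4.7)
The plan is to take $U$ to be the singleton tree consisting of a single $\Pl$-vertex and, for each opponent tree $V$, to exhibit a winning strategy from the normalised initial position, in which the root of $V$ is linked to the $\Pl$-root by an edge $\Op \to \Pl$ labelled $A \tens \wn \nop{A}$. The crucial asset is the total absence in the naive rules of any move on an $\oc$-labelled output edge: the vertex that ends up holding the $\wn \nop{A}$-side after the tensor split will therefore have only two relevant options, and the ``weakening'' one wipes out the whole $\Pl$-side in one shot.

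I would first route the token to the $\Op$-root. Whenever the token is on $\Pl$, the strategy plays the negative move on the main input and passes it to $\Op$; welcomingness lets $\Op$ wander inside $V$ until it either gets stuck (immediate $\Pl$-win by Lemma~\ref{lemma:finite:naive}) or comes back up and engages the main edge. The only available move on the main edge is the new tensor rule, which splits $\Op$ into $\Op_1, \Op_2$, connected to $\Pl$ by edges labelled $A$ and $\wn \nop{A}$ respectively, together with two identical copies of some $\wn \The$-subtree of $V$ shared between them; the token returns to $\Pl$.

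The strategy now passes the token to $\Op_2$ via the negative move on $\wn \nop{A}$. If $\Op_2$ weakens, the $\Pl$-side is erased and $\Op_2$ is left alone with no legal moves, ending the play with the token on an opponent and $\Pl$ winning outright. If $\Op_2$ derelicts, the edge becomes $\nop{A}$ and, after normalisation, we reach a cut configuration on $A$ between $\Op_1$ and $\Op_2$, with $\Pl$ in the middle and two identical $\wn \The$-copies. From there I would play \emph{copycat}: every move of $\Op_i$ on its side of the cut, or inside its $\wn \The$-copy, is mirrored by $\Pl$ on the symmetric side; parts of $V$ distributed non-symmetrically between $\Op_1$ and $\Op_2$ are simply left alone, so $\Op$ can at worst exhaust them, each time leaving the token on an opponent.

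The principal obstacle is a clean justification of the copycat phase. I would argue by induction on $A$, invoking Lemma~\ref{lemma:finite:naive} for termination, that mirroring preserves the duality invariant between the two halves, and that any $\oc$-subformula of $A$ ultimately blocks whichever opponent is holding the token, never $\Pl$. This last point is precisely what makes the naive game validate a formula that it should not.
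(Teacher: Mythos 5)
Your proposal is correct and follows essentially the same route as the paper's (much terser) proof: after the forced tensor split one reaches $A \vdash \oc A$ at the \Proponent{}, passes the token along the $\wn\nop{A}$ input, and either gets erased (an immediate win, since no \proponent{} remains) or, after dereliction, lands in the axiom configuration $A \vdash A$, which the paper disposes of via its unproven lemma that axioms are \GLLT{}-valid and you via the same copycat argument. The only slip is the claim that after weakening $\Op_2$ has ``no legal moves'' --- it may still act inside its residual copy of $\wn\The$ and of $V$ --- but this is harmless since every maximal play then ends with the token on an opponent anyway.
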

We first observe:
\begin{lemma}
  Axioms $A \vdash A$ are \GLLT{}-valid.
\end{lemma}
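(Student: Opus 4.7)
I take $U$ to be a single \proponent{} vertex, so the initial position is this vertex $\Pl$ joined by a single edge labelled $A \impll A$ to the root of the opponent subtree $V$. The strategy will be the classical \emph{copycat}: $\Pl$ and its tensor-descendants mirror every opponent action on one half of the axiom formula by the dual action on the other half, in the spirit of a standard axiom expansion of $A \impll A$.

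The invariant maintained throughout any play is that the sequents of the descendants of the initial $\Pl$-vertex form a multiset of \emph{axiom pairs} $\{\nop{B_i}, B_i\}$, with each $B_i$ a subformula of $A$, the two members of each pair of opposite polarities and sitting on different edges leading to (possibly distinct) opponent neighbours. Initially, with $U$ a single vertex facing a single edge, the sequent is the compressed pair $\{A \impll A\}$; it becomes a genuine pair after one opponent tensor move on the normalised edge $A \tens \nop{A}$.

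Whenever a $\Pl$-vertex holds the token, one of two situations applies. Either (i) an opponent's latest positive move has just broken one formula $F$ of one of $\Pl$'s pairs, leaving its dual $\nop{F}$ intact; then $\Pl$ plays the corresponding positive move on $\nop{F}$, restoring the pair (in the tensor subcase this splits $\Pl$ into two descendants, each inheriting one of the resulting smaller pairs, together with copies of any shared $\wn$-context). Or (ii) no response is pending, and $\Pl$ passes the token via a negative move on some input edge of a pair; such an edge always exists because every pair contains one formula of each polarity. The strategy is welcoming by construction, plays are finite by Lemma~\ref{lemma:finite:naive}, and since $\Pl$ always has a legal move while holding the token, terminal positions must leave the token with some opponent, making the strategy winning.

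The main obstacle is verifying the invariant through the exponential moves. The $\wn$-dereliction case is smooth: $\Pl$ mirrors with its own dereliction on the paired edge. The $\wn$-weakening case is peculiar: when an opponent weakens a $\wn B$ edge adjacent to $\Pl$, the entire subtree containing $\Pl$ (including the paired opponent subtree, by connectedness and pairedness) is deleted, leaving only the weakening opponent with the token --- this trivially wins. The remaining cases (tensor, plus, $\un$) follow similar mirror-and-split patterns once the pair-structure is set up, and the induction on subformula size of the $B_i$'s ensures that these mirror rounds eventually exhaust the formula tree.
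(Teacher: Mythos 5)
Your proposal is correct, but it takes a different route from the paper's. The paper states this lemma without proof: it is meant to be read off from Proposition~\ref{proposition:soundness:naive} (every \LLT{}-provable formula is \GLLT{}-valid), whose own proof is deferred to the method of Section~\ref{sec:logic}, namely Lemma~\ref{lemma:winning}: equip each \proponent{} with a cut-free proof of its sequent and let it play by following that proof. Applied to the cut-free $\eta$-expansion of $\nop{A}, A$, that recipe produces exactly your copycat, so the two arguments coincide in substance; yours is simply self-contained and, usefully, it establishes the stronger statement actually needed in the proof of Proposition~\ref{proposition:unsatisfactory} --- a winning strategy for a \proponent{} whose sequent is the bare pair $\{A, \nop{A}\}$ with edges to arbitrary opponent subtrees, not merely \GLLT{}-validity of the formula $A \impll A$. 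A few details are loose but harmless: (i) since your invariant gives each \proponent{} descendant exactly one pair and nothing else, the $\wn\Theta$ context in the mirrored tensor move is always empty, so the ``copies of any shared $\wn$-context'' clause is superfluous; (ii) in the weakening case the opponent's move deletes the whole subtree on the \proponent{}'s side of the weakened edge, which indeed contains the paired edge, but it does not leave ``only'' the weakening opponent --- the rest of its own side survives; what matters is just that surviving \proponents{} keep their pairs intact and the token stays with an opponent; (iii) you should note that the set of plays you describe is stable under isomorphism (it is, since the discipline is defined uniformly), as the paper's definition of strategy requires it. With those caveats the argument goes through: totality of the \proponent{} team plus finiteness of plays (Lemma~\ref{lemma:finite:naive}) gives winning, exactly as in Lemma~\ref{lemma:total}.
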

Then:
\begin{proof}[of \Proposition{proposition:unsatisfactory}]
  One reaches a sequent $A \vdash \oc A$, where, roughly, one may pass
  the token along $\oc A$, and then either be erased (and hence win),
  or become $A \vdash A$ (and win again). 
\qed\end{proof}

The remainder of the paper patches this deficiency by allowing $n$-ary
dereliction. We start by considering a new set of inference rules.

\section{\LLTN{}}\label{sec:rules}

Let us start from the rules of $\LLT{}$. We want to patch them to add
$n$-ary dereliction. This means that instead of contraction,
weakening, and dereliction, we use the rules:
  \begin{mathpar}
    \inferrule{\Gam, \bot}{\Gam, \wn A} \and 
    \inferrule{\Gam, A}{\Gam, \wn A} \and \ldots \and
    \inferrule{\Gam, {\parr^n} A}{\Gam, \wn A}
    \and \ldots
  \end{mathpar}
  with $\parr^ 0 A= \bot$ and $\parr^{n +1} A= \parr^{n} A \parr A$.

To preserve the symmetry between dual connectives, we use for $\oc$ the rule 
\begin{mathpar}
  \inferrule{\Del, \un \\ \Del, A \\ \ldots \\ \Del, {\tens^{n}} A \\
    \ldots}{\Del, \oc A}
\end{mathpar}
which has infinitely many premises.

The idea behind these rules is to encode $\oc A$ as an infinitary
additive conjunction $\bigwith_{n \in \nat} A^{\tens n}$, which is an
old idea, at least as old as Lafont's work on categorical models of
\LL{}~\cite{DBLP:journals/tcs/Lafont88,DBLP:journals/tcs/Lafont88a}.

This yields a set of inference rules, which we sum up
in~\Figure{fig:sequent} and call $\LLTN{}$.  Since we have infinitary
rules, let us precisely define proofs. 
\begin{definition}
  A proof in \LLTN{} is a tree whose branches are all finite, and
  whose nodes are labelled with inference rules as usual.
\end{definition}
However, we are mainly interested in \emph{bounded} proofs, in the
following sense:
\begin{definition}\label{def:bounded}
  A proof in \LLTN{} is \emph{bounded} when it is either cut-free, or
  of the form
\begin{equation}
  \inferrule{
    \inferrule{\pi_1}{A} \\
    \inferrule{\pi_2}{A^\bot, \Gam} \\
  }{\Gam}
\label{eq:bounded}
\end{equation}
 with $\pi_1$ and $\pi_2$ cut-free.
\end{definition}

\begin{figure*}[!th]
  \centering \abovedisplayskip=0pt \belowdisplayskip=0pt
  \begin{tabular}{|p{.4\linewidth}|p{.4\linewidth}|}
    \hline
    \hfil Positive rules \hfil & \hfil Negative rules \hfil \\ \hline%
    \begin{mathpar}%
      \inferrule{ }{\un} %
      \end{mathpar}&
      \begin{mathpar}
        \inferrule{\Gam}{\Gam, \bot} 
      \end{mathpar}
      \\  \hline
      &
      \begin{mathpar}
        \inferrule{ }{\Gam, \TT} 
      \end{mathpar} \\ \hline
      \begin{mathpar}
        \inferrule{
          \Gam, \wn \Theta, A \\
          \Del, \wn \Theta, B }{ \Gam, \Del, \wn \Theta, A \tens B
        } 
      \end{mathpar}
      &
      \begin{mathpar}
        \inferrule{\Gam, A, B}{\Gam, A \parr B} 
      \end{mathpar} \\ \hline
      \begin{mathpar}
        \inferrule{\Gam, A}{\Gam, A \plus B}  \and
        \inferrule{\Gam, B}{\Gam, A \plus B}  
      \end{mathpar}
      & 
      \begin{mathpar}
        \inferrule{\Gam, A \\ \Gam, B}{\Gam, A \with B} 
      \end{mathpar} \\ \hline
      \begin{mathpar}
        \inferrule{\Gam, {\parr^n} A}{\Gam, \wn A} 
      \end{mathpar}
      & 
      \begin{mathpar}
  \inferrule{\Gam, \un \\ \ldots \\ \Gam, {\tens^{n}} A \\
    \ldots}{\Gam, \oc A} 
      \end{mathpar}
      \\ \hline
      \multicolumn{2}{|p{.8\linewidth}|}{  
  \begin{mathpar}
    \inferrule{\Gam, B, A, \Del}{\Gam, A, B, \Del} 
    \and
    \and
    \inferrule{\Gam, A \\ \Del, A^\bot}{\Gam, \Del} 
    \end{mathpar}    
  }   \\ \hline
\end{tabular}
\caption{Inference rules for \LLTN{}}
\label{fig:sequent}
\end{figure*}

We now show that:
\begin{itemize}\item 
  \LLTN{} is a model of \LL{} provability;
\item This model is incomplete, i.e., the converse does not hold;
\item $\LLTN{}$ does not enjoy cut elimination.
\end{itemize}

\begin{lemma}
  Axioms $A \vdash A$ have bounded proofs in \LLTN{}.
\end{lemma}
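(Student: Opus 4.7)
The plan is to prove by structural induction on $A$ that $\nop A, A$ admits a cut-free proof in \LLTN{}; since cut-free proofs are bounded by Definition~\ref{def:bounded}, the claim follows immediately.

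For units, $\un, \bot$ follows from the $\un$ axiom by the $\bot$ rule, and $\top, \zero$ is an instance of the $\top$ rule. For $A = A_1 \tens A_2$, apply the $\parr$ rule to $\nop{A_1} \parr \nop{A_2}, A_1 \tens A_2$ to reach $\nop{A_1}, \nop{A_2}, A_1 \tens A_2$, and then apply \NewTens{} with empty $\wn \The$ to split into the premises $\nop{A_1}, A_1$ and $\nop{A_2}, A_2$, both available by induction. The cases $\parr$, $\plus$, $\with$ are entirely analogous, using $\tens$, $\plus$/$\with$ as appropriate.

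The main work lies in the exponentials. For $A = \oc B$ I derive $\wn \nop B, \oc B$ by the infinitary $\oc$ rule, which demands a premise $\wn \nop B, \tens^n B$ for every $n \in \nat$. Fix such an $n$; applying dereliction on $\wn \nop B$ with parameter $n$ reduces the premise to $\parr^n \nop B, \tens^n B$, which I establish by an auxiliary induction on $n$. The base $n = 0$ is $\bot, \un$, handled above. For the step, rewrite $\parr^{n+1} \nop B = \parr^n \nop B \parr \nop B$ and $\tens^{n+1} B = \tens^n B \tens B$, apply the $\parr$ rule, and then \NewTens{} with empty $\wn \The$, splitting into $\parr^n \nop B, \tens^n B$ (auxiliary induction hypothesis) and $\nop B, B$ (outer induction hypothesis). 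The case $A = \wn B$ is dual, with $B$ and $\nop B$ swapped.

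The resulting derivation is cut-free. It has $\omega$-branching at each use of the $\oc$ rule, but every branch is finite---its length is bounded in terms of the size of $A$ and of the parameters $n$ chosen along it---so it is a legitimate proof in \LLTN{}. The main obstacle is the auxiliary induction on $n$ needed to peel apart $\tens^n B$ against $\parr^n \nop B$; fortunately, the freedom to take $\wn \The = \emptyset$ in \NewTens{} makes this decomposition behave exactly like the usual tensor rule, so nothing in the argument depends on having duplicators or cuts.
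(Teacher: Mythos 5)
The paper states this lemma without any proof, so there is nothing to compare against directly; your argument supplies the routine identity-expansion the authors evidently considered obvious, and it is correct. The structural induction goes through exactly as you describe: the unit, multiplicative and additive cases are immediate, and the only point of substance is the exponential case, where the $\oc$ rule forces you to prove $\wn \nop{B}, \tens^n B$ for every $n \in \nat$, which you correctly reduce via $n$-ary dereliction to $\parr^n \nop{B}, \tens^n B$ and discharge by a secondary induction on $n$ using \NewTens{} with $\wn\The = \emptyset$ (so that it degenerates to the ordinary tensor rule). Your closing remarks are also the right sanity checks for \LLTN{}: the derivation is infinitely branching at each $\oc$ rule but every branch is finite, so it is a proof in the sense of the paper's definition, and being cut-free it is bounded by Definition~\ref{def:bounded}. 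No gaps.
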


\begin{theorem}\label{thm:bounded}
  Each provable sequent in \LLT{} admits a bounded proof in \LLTN{}.
\end{theorem}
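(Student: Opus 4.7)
The plan is to combine Theorem~\ref{thm:bounded:ll} with a structural translation of cc-free \LLT{} proofs into cut-free \LLTN{} proofs. Given a sequent $\Gam$ provable in \LLT{}, Theorem~\ref{thm:bounded:ll} yields cc-free \LLT{} proofs $\pi_1$ of $A$ and $\pi_2$ of $A^\bot, \Gam$ for some $A$, joined by a single cut at the root. If we can translate each $\pi_i$ into a cut-free \LLTN{} proof, then a final \LLTN{} cut assembles them into the desired bounded \LLTN{} proof of $\Gam$.

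The translation is defined by induction on the cc-free \LLT{} proof. Rules shared between \LLT{} and \LLTN{} — exchange, the constants, $\parr$, $\with$, $\plus$, $\top$, and in particular the modified tensor rule \NewTens{} — transfer directly. Axioms $A \vdash A$ are handled via the preceding lemma; they can in fact be established cut-free by induction on $A$, using the infinitary $\oc$-rule of \LLTN{} together with \NewTens{} sharing the $\wn$-context, so that an axiom on $\oc A$ reduces to axioms on the smaller subformula $A$.

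The three \LLT{} rules absent from \LLTN{} — weakening, dereliction, and promotion — are simulated as follows. Weakening $\inferrule{\Gam}{\Gam, \wn A}$ becomes $\bot$-introduction followed by the $n = 0$ instance of the \LLTN{} $\wn$-rule, since $\parr^0 A = \bot$. Dereliction $\inferrule{\Gam, A}{\Gam, \wn A}$ is simulated by $\bot$-introduction, $\parr$-introduction, and then the $n = 1$ instance, since $\parr^1 A = \bot \parr A$. Promotion $\inferrule{\wn \Gam, A}{\wn \Gam, \oc A}$ is the crucial case: the conclusion is derived by the infinitary \LLTN{} $\oc$-rule, whose $n$-th premise $\wn \Gam, \tens^n A$ is built by a secondary induction on $n$, with base case $\wn \Gam, \un$ obtained from the $\un$-axiom by simulated weakenings, and inductive step combining the inductive proof of $\wn \Gam, \tens^n A$ with the translation of the hypothesis $\wn \Gam, A$ via \NewTens{} using shared $\wn$-context $\wn \Gam$.

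The main obstacle is precisely this infinitary simulation of promotion: one must verify that the resulting \LLTN{} derivation still has only finite branches, as required by the definition of an \LLTN{} proof. This is immediate, since each auxiliary derivation of $\wn \Gam, \tens^n A$ has finite height. A single \LLTN{} cut then combines the translated $\pi_1$ and $\pi_2$ into the desired bounded proof of $\Gam$.
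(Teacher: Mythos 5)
Your proposal is correct and follows essentially the same route as the paper: an induction translating the bounded \LLT{} proof given by Theorem~\ref{thm:bounded:ll} into \LLTN{}, with weakening and dereliction absorbed into the $n$-ary dereliction rule and promotion handled by the infinitary $\oc$-rule, each premise $\wn\Gam, \tens^n A$ being built by splitting the $n$-fold tensor with \NewTens{} over the shared context $\wn\Gam$. Your extra care about cut-free axioms and the finiteness of branches only makes explicit what the paper leaves implicit.
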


\begin{proof}
  By induction on a given bounded proof in \LLT{}, we construct a
  bounded proof (of the same sequent) in \LLTN{}. All cases but
  promotion are obvious by induction hypothesis (because the rules of
  \LLT{} are almost the same as those of \LLTN{}). For promotion,
  assume given a proof of $\wn \Gam, A$. We need to find an \LLTN{}
  proof of each $\wn \Gam, \tens^n A$: for each $n$, we split the
  $n$-fold tensor, and reduce to proving $\wn \Gam, A$ or $\wn \Gam,
  \un$, which are both have bounded proofs in \LLTN{} (the former by
  induction hypothesis, the latter straightforwardly).  Since we only
  introduce tensor and $\un$ rules, this results in a bounded proof.
\qed\end{proof}

We have thus shown that bounded proofs in \LLTN{} are a model of
\LLT{} provability.  However, the converse does not hold, thanks to an
example of Olivier Laurent:
\begin{proposition}\label{proposition:lltn:incomplete}
  \LLT{} is not a model of \LLTN{}.
\end{proposition}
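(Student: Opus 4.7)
The plan is to exhibit, following Olivier Laurent, a concrete sequent $\Gam$ admitting a bounded proof in $\LLTN$ but no proof in $\LLT$, and then to establish each of these two facts.

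The guiding intuition is that the $\LLTN$ promotion rule of Figure~\ref{fig:sequent} imposes no restriction on its context, in contrast with the $\wn$-context constraint of the $\LLT$ promotion. One therefore looks for $\Gam$ of the form $\Del, \oc A$ in which $\Del$ contains non-$\wn$ formulae, yet each premise $\Del, \tens^n A$ (for $n \in \nat$) is still provable, possibly via a different cut-free derivation for each $n$. The chosen $\Gam$ then behaves like a ``pointwise'' infinitary promotion that $\LLT$, equipped only with finitary promotion plus a single initial cut, cannot simulate.

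Having pinned down such a $\Gam$, the first step is to write a bounded $\LLTN$-proof in the sense of Definition~\ref{def:bounded}: for every $n$, produce a cut-free derivation of $\Del, \tens^n A$ (perhaps using an initial cut at the outermost level of the whole proof, within the bounded format), and then assemble the family under the infinitary $\oc$-rule. This step is routine once the right $\Gam$ is in hand.

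The second step, which is the main obstacle, is to show that $\Gam$ has no $\LLT$ proof. By cut elimination in $\LL$ (equivalently in $\LLT$), it suffices to rule out cut-free $\LLT$-derivations. The cleanest route is semantic: interpret the formulae in a sound model of $\LL$ (for instance a coherence space model, a suitable phase space, or a simple relational model) chosen to invalidate $\Gam$. Finding a model that is simultaneously sound for all of $\LL$ and falsifies the specific $\Gam$ is the delicate part, and is presumably where Laurent's insight enters. A syntactic alternative is to enumerate the possible shapes of cut-free $\LLT$-proofs of $\Gam$ and to show by induction on the last rule that every attempt terminates at an unprovable subgoal (typically one of the form $\un, \un$ or $\Del', \un$ for a problematic $\Del'$, which $\LL$ visibly cannot derive); this works in principle but is tedious.
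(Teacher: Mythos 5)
Your plan correctly identifies the right source of the discrepancy --- the \LLTN{} promotion rule places no $\wn$-constraint on its context, so one should look for a sequent $\Del, \oc A$ where $\Del$ is not a list of $\wn$-formulae but each premise $\Del, \tens^n A$ is separately provable --- and this is exactly the mechanism behind the paper's counterexample. But the proposal stops short of producing any counterexample: it never names a concrete $\Gam$, and the two verifications are deferred (``routine once the right $\Gam$ is in hand'' for one, ``presumably where Laurent's insight enters'' for the other). Since the entire content of the proposition is the existence of such a sequent, what you have written is a search strategy rather than a proof; as it stands there is a genuine gap.

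For the record, the paper's witness is $\bot \plus (X^\bot \parr \wn X^\bot), \oc X$ (with $X = \bot \tens \bot$, say, to stay inside the variable-free syntax). The bounded \LLTN{} proof handles the premise $\Del, \un$ through the left disjunct $\bot$ and each premise $\Del, \tens^n X$ through the right disjunct, using $\wn X^\bot$ to absorb the $n$ copies of $X$ --- precisely the ``different derivation for each $n$'' you anticipate. You also overestimate the difficulty of the unprovability half: no phase or coherence semantics is needed. In a cut-free \LL{} (or \LLT{}) proof of this sequent, promotion on $\oc X$ is permanently blocked because the context can never be reduced to $\wn$-formulae only (whichever disjunct one picks, a non-$\wn$ formula such as $\bot$ or $X^\bot$ survives), and a short case analysis on the last rule closes every branch. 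So the ``tedious'' syntactic alternative you mention last is in fact the short and intended route; the delicate step is choosing $\Gam$, not refuting its provability.
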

\begin{proof}
  The sequent $\bot \plus (X^\bot \parr \wn X^\bot), \oc X$ has a
  bounded proof in \LLTN{}, as follows:
  \begin{mathpar}
    \inferrule*{
      \inferrule*{\inferrule*{\dots}{\bot, \un}}{\bot \plus (X^\bot \parr \wn X^\bot), \un} \\
      \inferrule*{\inferrule*{X^\bot, X}{X^\bot \parr \wn X^\bot, X}}{\bot \plus (X^\bot \parr \wn X^\bot), X} \\
      \ldots }{ \bot \plus (X^\bot \parr \wn X^\bot), \oc X.  }
  \end{mathpar}
  However, it is not provable in \LL{}.  There could be a problem
  since it uses a variable, but taking $X = \bot \tens \bot$ makes
  everything work the same.
\qed\end{proof}
Finally:
\begin{proposition}
  \LLTN{} does not enjoy cut elimination.
\end{proposition}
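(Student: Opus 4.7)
The plan is to exhibit a concrete $\LLTN{}$-provable sequent admitting no cut-free derivation. My intended source of obstruction is the failure of the cut-elimination reduction for a cut between a $\NewTens{}$-introduction and a parr-introduction. If the two premises of $\NewTens{}$ share a context $\wn \Theta$, propagating the cut through both premises would produce a conclusion with \emph{two} copies of $\wn \Theta$ in place of the original single copy; in $\LL{}$ one repairs this by contraction, but $\LLTN{}$ has no contraction rule and so this standard reduction has no target inside the system.

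Concretely, I would pick $A$, $B$ and contexts $\Gam_1, \Gam_2, \wn \Theta, \Del$ so that each of the sequents $\vdash \Gam_1, \wn \Theta, A$, $\vdash \Gam_2, \wn \Theta, B$, and $\vdash \nop{A}, \nop{B}, \Del$ admits a cut-free derivation (combinations of axioms, $n$-ary derelictions, and standard introductions suffice for simple choices of parameters). Applying $\NewTens{}$ to the first two premises yields $\vdash \Gam_1, \Gam_2, \wn \Theta, A \tens B$, and a parr rule on the third yields $\vdash \nop{A} \parr \nop{B}, \Del$. A single cut on $A \tens B / \nop{A} \parr \nop{B}$ then produces a bounded proof of $\vdash \Gam_1, \Gam_2, \wn \Theta, \Del$.

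The crux is then to verify that $\vdash \Gam_1, \Gam_2, \wn \Theta, \Del$ has no cut-free proof. This proceeds by a case analysis on the last rule of a hypothetical cut-free derivation: axiom, the unit and falsity rules, each connective introduction, every $n$-ary dereliction on each $\wn$ component, and the infinitary promotion. For each, one shows that every resulting subgoal is unprovable. The critical supporting observations are that the shared $\wn \Theta$ cannot be produced by any $\NewTens{}$ split without invoking a premise in which $\wn \Theta$ is absent, and that no $n$-ary dereliction applied to a $\wn \Theta$-component reduces to a provable $\vdash \parr^n(\ldots)$ premise, so the presence of $\wn \Theta$ in the conclusion can only be explained by the unavailable cut-with-$\NewTens{}$ pattern.

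The main obstacle I anticipate is tuning the parameters so that this case analysis truly succeeds: the unconstrained infinitary promotion and $n$-ary dereliction of $\LLTN{}$ make cut-free provability surprisingly permissive, so the choice of $A$, $B$, $\wn \Theta$ and $\Del$ must be made delicately to ensure that no alternative cut-free strategy circumvents the intended bottleneck. A plausible starting point is $\wn \Theta = \wn X$ for $X$ a variable (or $X = \bot \tens \bot$ as in \Proposition{proposition:lltn:incomplete}), with $A,B$ chosen so that $\vdash \nop{A}, \nop{B}$ is derivable by an axiom-and-parr pattern, as this minimises the alternative derivations one must rule out.
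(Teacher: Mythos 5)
There is a genuine gap here: you never actually produce a counterexample, and the mechanism you propose to build one around is doubtful. Your ``critical supporting observation'' --- that the shared $\wn \Theta$ in the conclusion ``cannot be produced by any $\NewTens{}$ split without invoking a premise in which $\wn\Theta$ is absent'' --- is contradicted by the very form of the rule: $\NewTens{}$ explicitly allows the whole of $\wn\Theta$ to occur in \emph{both} premises, so a shared $\wn$-context is exactly what a cut-free $\NewTens{}$ inference produces. More fundamentally, your argument only shows that the \emph{standard reduction} of a $\tens/\parr$ cut leaves two copies of $\wn\Theta$ with no contraction to merge them; failure of one reduction strategy does not show that the endsequent $\Gam_1,\Gam_2,\wn\Theta,\Del$ has no cut-free proof by some other route, and you yourself concede that the decisive case analysis remains to be done and that the parameters must be ``tuned delicately''. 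Since the entire content of the proposition is the exhibition of one concrete sequent together with the verification that it is not cut-free provable, the proposal as it stands proves nothing.

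For comparison, the paper's counterexample is $X \with ((X \with Y)\parr Y) \vdash \wn(X\with Y)$ (with $X=\bot\tens\TT$, $Y=\un\parr\zero$ to avoid variables). Its provability comes not from a $\tens/\parr$ cut but from a \emph{contraction} on $\wn(X\with Y)$ in \LLT{}, converted into a bounded \LLTN{} proof via Theorem~\ref{thm:bounded}; the obstruction to a cut-free proof is that $n$-ary dereliction forces a commitment to a single $n$ in $\parr^n(X\with Y)$, whereas the two uses of the $\wn$ formula in the contraction are additively incompatible (one needs $X$, the other needs $X\with Y$ and $Y$). If you want to salvage your route, you would need to find a concrete instance and carry out the unprovability analysis; it is likely that an additive connective must appear somewhere, precisely because the $\NewTens{}$ rule already absorbs the purely multiplicative duplication of $\wn$-contexts that you are targeting.
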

\begin{proof}
  Consider the sequent $X \with ((X \with Y) \parr Y) \vdash \wn (X
  \with Y)$.  It admits the following \LLT{} proof:
  \begin{mathpar}
    \inferrule*
    {
      \inferrule*{
      \inferrule*{
        \inferrule*{
      \inferrule*{
        X \vdash X
      }{
        X \vdash \wn (X \with Y), X
      }
    }{
      X \with ((X \with Y) \parr Y) \vdash \wn (X \with Y), X
      } \\
      \inferrule*{
        \inferrule*{
        Y \vdash Y
        \\
      \inferrule*{
        X \with Y \vdash X \with Y
      }{
        X \with Y \vdash \wn (X \with Y)
      }
      }{
        ((X \with Y) \parr Y) \vdash \wn (X \with Y), Y
      }
      }{
        X \with ((X \with Y) \parr Y) \vdash \wn (X \with Y), Y
      }
      }{
        X \with ((X \with Y) \parr Y) \vdash \wn (X \with Y), (X \with Y)
      }
      }{
        X \with ((X \with Y) \parr Y) \vdash \wn (X \with Y), \wn (X \with Y)
      }
    }{
      X \with ((X \with Y) \parr Y) \vdash \wn (X \with Y),
    }
  \end{mathpar}
  hence has a bounded proof in \LLTN{} by Theorem~\ref{thm:bounded}.
  However, careful inspection shows that it has no cut-free proof.
  Again, we do this using variables, which are not in the syntax.  It
  all works the same by taking $X$ and $Y$ such that none entails the
  other, e.g., $X = \bot \tens \TT$ and $Y = 1 \parr \zero$.
\qed\end{proof}

Nevertheless, we have:
\begin{proposition}
  \LLTN{} features admissibility of cut, i.e., if $\Gam, A$ and
  $\nop{A}, \Del$ both have bounded proofs, then so does $\Gam, \Del$.
\end{proposition}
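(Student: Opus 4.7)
The plan is to construct, from the bounded proofs $\pi_1$ of $\Gam, A$ and $\pi_2$ of $\nop{A}, \Del$, a single bounded proof of $\Gam, \Del$, i.e., a proof with at most one cut between two cut-free subproofs. The idea is to repackage the (up to) two existing cuts inside the $\pi_i$ together with the newly required cut on $A$ into a single cut on a cleverly chosen formula. The key enabling rule is \NewTens{}, which allows two cut-free proofs with unrelated contexts to be combined cut-free.

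To proceed uniformly, I would view each $\pi_i$ as a final cut on some formula $B_i$ between cut-free subproofs $\sigma_{i1}$ (of $B_i$) and $\sigma_{i2}$ (of $\nop{B_i}$ together with the appropriate conclusion); if $\pi_i$ is already cut-free, I take $B_i = \un$, $\sigma_{i1}$ the $\un$-axiom, and $\sigma_{i2}$ obtained from $\pi_i$ by one $\bot$-introduction. I would then pick as the single cut formula
\[
D \;=\; B_1 \tens B_2 \tens (\nop{A} \parr A).
\]
On the left, a cut-free proof $\tau_1$ of $D$ is built by tensoring $\sigma_{11}$, $\sigma_{21}$, and a cut-free proof of the axiom $\nop{A} \parr A$ (whose cut-free \LLTN{} existence follows from a routine induction on $A$, using \NewTens{}, $n$-ary dereliction, and the infinitary $\oc$-rule). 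On the right, I would apply \NewTens{} (with empty shared $\wn$-context) to $\sigma_{12} : \nop{B_1}, \Gam, A$ and $\sigma_{22} : \nop{B_2}, \nop{A}, \Del$, using $A$ and $\nop{A}$ as the tensor components, to obtain a cut-free proof of $\nop{B_1}, \nop{B_2}, \Gam, \Del, A \tens \nop{A}$; two $\parr$-introductions then give a cut-free proof $\tau_2$ of $\nop{D}, \Gam, \Del$. Cutting $\tau_1$ against $\tau_2$ on $D$ yields the desired bounded proof of $\Gam, \Del$.

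The main obstacle, and precisely the reason the argument succeeds in \LLTN{} despite the failure of cut elimination, is that this construction is purely local: no iterative cut reduction is ever performed, and the four cut-free subproofs extracted from $\pi_1, \pi_2$ are reused essentially unchanged. The two crucial ingredients are \NewTens{}, which is exactly what permits gluing $\sigma_{12}$ and $\sigma_{22}$ without introducing an auxiliary cut, and cut-free provability of arbitrary axioms $A \vdash A$. The only subtlety I expect to require care is establishing the latter: one must check, by induction on $A$, that axioms admit genuinely cut-free \LLTN{} proofs rather than merely bounded ones, the $\oc/\wn$ case being handled via the infinitary $\oc$-rule fed with cut-free proofs of $\wn \nop{A'}, \tens^n A'$ built by iterated use of \NewTens{} and $n$-ary dereliction.
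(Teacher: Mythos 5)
Your construction is essentially the paper's own proof: the single cut formula you choose, $B_1 \tens B_2 \tens (\nop{A}\parr A)$, is (up to reordering) exactly the paper's $(\nop{A}\parr A)\tens\nop{B}\tens\nop{C}$, and the key step of gluing the two context-bearing cut-free subproofs via \NewTens{} on $A\tens\nop{A}$ followed by $\parr$-introductions is the same. Your uniform treatment of the case where a given proof is already cut-free (taking $B_i=\un$) and your explicit remark that axioms need genuinely cut-free proofs are minor refinements of details the paper leaves implicit.
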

\begin{proof} 
  Assuming first that the given bounded proofs have the shape:
  \begin{mathpar}
    \inferrule*{
      \inferrule{\pi_1}{\Gam, A, B} \\
      \inferrule{\pi_2}{\nop{B}}
      }{
        \Gam, A
      } \and
    \inferrule*{
      \inferrule{\pi_3}{\Del, \nop{A}, C} \\
      \inferrule{\pi_4}{\nop{C}}
      }{
        \Del, \nop{A}
      }  \end{mathpar}
    with the $\pi_i$'s cut-free.
    We then form the bounded proof:
    \begin{mathpar}
      \inferrule*{
        \inferrule*{
          \inferrule*{
            \inferrule*{\pi_1}{\Gam, A, B} \\
            \inferrule*{\pi_3}{\Del, \nop{A}, C}
          }{\Gam, \Del, (A \tens \nop{A}), B, C} 
        }{
          \Gam, \Del, (A \tens \nop{A}) \parr B \parr C
        } \\
        \inferrule*{
          \inferrule*{\inferrule*{\pi_2}{\nop{B}} \\
            \inferrule*{\pi_4}{\nop{C}}
          }{\nop{B} \tens \nop{C}} \\
          \nop{A}, A
        }{
          (\nop{A} \parr A) \tens \nop{B} \tens \nop{C}
        }
      }{ 
        \Gam, \Del.
      }
    \end{mathpar}
    Cases where not both proofs start with a cut behave essentially
    the same.
\qed\end{proof}

We now proceed to define a graphical game interpretation of \LLTN{}.

\section{A graphical game for \LLTN{}}\label{sec:game}
The positions of our game are exactly as in Section~\ref{sec:naive}.
Moves are as in Figure~\ref{fig:moves}, with both exponential moves
replaced by the single move scheme:
\begin{center}
  \begin{tabular}[t]{c|c}
    \multicolumn{1}{c}{From} & \multicolumn{1}{c}{To} \\ \hline
    \begin{hs}
      \asequent{c}{0,0} %
      \sequent{r}{2,0} %
      \arrow{c}{r}{$\wn A^\bot$} %
      \subtreerotwith{c}{90}{$\Gam$}{left} %
      \subtreerotwith{r}{-90}{$\Del$}{right} %
    \end{hs}
    &
    \begin{hs}
      \sequent{c}{0,0} %
      \asequent{r}{2,0} %
      \arrow{r}{c}{${\tens^{n}} A$} %
      \subtreerotwith{c}{90}{$\Gam$}{left} %
      \subtreerotwith{r}{-90}{$\Del$}{right} %
    \end{hs}
  \end{tabular}
\end{center}
(using the conventions of Figure~\ref{fig:moves}).

Plays are as in Section~\ref{sec:naive}: (directed, possibly
countable) paths in the graph of positions and moves.  
As in \Section{sec:naive}, we have:
\begin{theorem}\label{thm:finite} All plays are finite.
\end{theorem}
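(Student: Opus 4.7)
The plan is to prove strong normalization of the move relation by exhibiting an ordinal-valued measure $\rho$ on positions that strictly decreases at every move; well-foundedness of ordinals will then rule out infinite plays.

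First I would define a formula weight $\nu(F)$ by structural recursion: $\nu(\alpha) = 1$ for atoms $\alpha \in \{\un, \bot, \zero, \top\}$, $\nu(A \star B) = \nu(A) \oplus \nu(B) + 1$ for binary connectives $\star$ (with $\oplus$ the Hessenberg natural sum), and $\nu(\oc A) = \nu(\wn A) = \omega^{\nu(A)}$. The key property is that $\omega^{\nu(A)}$ strictly dominates $\nu(\tens^n A) = n\cdot \nu(A) + (n-1)$ for every finite $n$, so the exponential move, which replaces $\oc A$ by $\tens^n A$, strictly decreases the $\nu$-contribution of the affected edge regardless of the $n$ chosen.

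Lifting $\nu$ to positions is the delicate step. A flat Hessenberg sum $\bigoplus_e \nu(\ell(e))$ is \emph{increased} by the tensor move, because the $\wn\The$-subtree is duplicated and the new copies outweigh the $\nu(A\tens B)-\nu(A)-\nu(B)$ saving. Instead I would define $\Phi(U)$ recursively on the tree rooted at the token-holder, with each child subtree's recursive weight amplified by a sufficiently fast-growing function of the incoming edge label, chosen so that $\nu(A\tens B)$ on an edge dominates the total amplified weight of any subtree lying on $v$'s side of that edge; this absorbs the duplication exactly where it occurs. To account for negative moves, which leave the labelled tree unchanged but merely shift the token, I would augment $\rho$ with a secondary lexicographic coordinate $\delta(U)$ bounding the length of an upstream path from the token. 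Since positions are acyclic, $\delta$ is finite, and one easily checks that $\delta$ strictly decreases at every negative move because a negative step forces the token one edge upstream and cannot return through the same edge.

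The hard part will be calibrating the amplification factors in $\Phi$. For the tensor move to strictly decrease $\Phi$, the amplifier applied to a subtree must be dominated by $\nu(A\tens B)$ whenever $A\tens B$ labels an edge on the path to that subtree, yet $\nu(A\tens B)=\nu(A)\oplus\nu(B)+1$ is a local quantity while the subtree may contain arbitrarily deep formulae. I expect to resolve this tension by first establishing, by straightforward induction on plays, that the maximum $\oc/\wn$-nesting depth of edge labels is non-increasing along any play and hence uniformly bounded by its initial value $H_0$; this global cap allows the amplifier and $\nu(\oc A)$ to be defined using a tower of $\omega$ of height $H_0$, which is large enough to dominate all subtree weights that can ever be duplicated in a play starting from $U_0$.
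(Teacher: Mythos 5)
Your plan correctly isolates the crux --- the tensor move duplicates the $\wn\The$ subtree while the ``payment'' is only the passage from $A \tens B$ to $A$ and $B$ --- but the step you defer (``calibrating the amplification factors'') is not a technical detail to be filled in later; it is the whole difficulty, and the resolution you propose does not address it. Bounding the $\oc/\wn$-nesting depth of edge labels caps the \emph{formula} complexity, but what must be dominated when a tensor move fires is the weight of the duplicated \emph{subposition} $\wn\The$: a tree of vertices and edges whose size is not bounded along a play. Indeed, an exponential move turns $\oc A$ into $\tens^{n} A$ for an adversarially chosen $n$, and splitting that formula duplicates the adjacent $\wn$-subtrees $n-1$ times; iterating, the positions reachable from a fixed $U_0$ admit no uniform bound on the number of vertices, hence no uniform bound on any weight that counts them. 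A quantity such as $\nu(A\tens B)$, or any amplifier computed from edge labels and the initial nesting depth $H_0$, is a \emph{fixed} ordinal and cannot dominate the weights of all subtrees that may come to sit beside a $\tens$-labelled edge. A second unaddressed point: your $\Phi$ is defined by recursion from the token-holder, but the tensor move sends the token \emph{across} the broken edge, so the position is re-rooted and the entire chain of amplifiers is recomputed; the subtrees $\Gam, \wn\The, \Gam'$ change depth relative to the new root, and nothing in the sketch controls the effect of that re-rooting on $\Phi$. (Your secondary coordinate $\delta$ for negative moves is fine, and $\omega^{\nu(A)}$ dominating $\nu(\tens^n A)$ for all $n$ correctly handles the infinitary branching of the $\oc$-move in isolation; the gap is only in the primary, tree-level measure.)

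For comparison, the paper does not attempt a globally decreasing measure at all. It adapts the method of David and Nour: define a position to be strongly normalizable (sn), add an ``exotic'' move so that sn is inherited by subpositions, and prove a glueing lemma --- if $U$ and the $V_i$ are sn then the position $W(U,V,A)$ obtained by joining them with $n$ edges labelled $A$ is sn --- by lexicographic induction on the triple $(A, U, n)$, where $A$ decreases in a well-founded formula order (with $\wn C$ above every $\parr^n C$, playing the role of your $\omega^{\nu(A)}$), and $U$ decreases along the move relation \emph{restricted to sn positions}, which is well-founded by definition rather than by an explicit ordinal. Duplication is then handled not by domination but by re-decomposition: after a tensor move on a $B\tens C$ edge, the resulting position is re-expressed as nested instances $W(W(V_j,U',\nop{B}),U'',\nop{C})$ with strictly smaller connecting formulae, so the induction hypothesis applies. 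If you want to complete your argument, I would recommend abandoning the global measure and proving this glueing lemma instead; your formula ordering can be reused essentially verbatim as the first lexicographic component.
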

We prove this by adapting to our graphical setting the syntactic
techniques of David and Nour~\cite{David01,David03}.

We will need an auxiliary move, which we call the \emph{exotic} move:
\begin{equation}
  \mbox{\begin{tabular}[t]{c|c}
    \multicolumn{1}{c}{From} & \multicolumn{1}{c}{To} \\ \hline
    \begin{hs}
      \asequent{c}{0,0} %
      \sequent{r}{2,0} %
      \arrow{c}{r}{$\un$} %
      \subtreerotwith{c}{90}{$\Gam$}{left} %
      \subtreerotwith{r}{-90}{$\Del$}{right} %
    \end{hs}
    &
    \begin{hs}
      \sequent{c}{0,0} %
      \asequent{r}{2,0} %
      \arrow{r}{c}{$\zero$} %
      \subtreerotwith{c}{90}{$\Gam$}{left} %
      \subtreerotwith{r}{-90}{$\Del\ .$}{right} %
    \end{hs} 
  \end{tabular}}\label{eq:new:move}
\end{equation}

Call a position $U$ \emph{strongly normalizable} (sn for short) if,
even if we add the above exotic move, all plays from $U$ are finite.  
Why do we need a new move? We want to be sure that if a
position is sn, so are all its subpositions, in the sense of
subgraphs. And the problem with this is that, although most moves on
a subposition come from a move on the full position, this is not the
case for a $\un$-move. Indeed, consider the bare $\un$-move
\begin{center}
  \begin{tabular}[t]{c|c}
    \multicolumn{1}{c}{From} & \multicolumn{1}{c}{To} \\ \hline
    \begin{hs}
      \asequent{c}{0,0} %
      \sequent{r}{2,0} %
      \arrow{c}{r}{$\un$} %
    \end{hs}
    &
    \begin{hs}
      \phant{c}{0,0} %
      \asequent{r}{2,0} %
    \end{hs}\ .
  \end{tabular}
\end{center}
If we embed the initial position into the bigger
\begin{center}
    \begin{hs}
      \sequent{l}{-2,0} %
      \asequent{c}{0,0} %
      \sequent{r}{2,0} %
      \arrow{c}{r}{$\un$} %
      \arrow{c}{l}{$\zero$} %
    \end{hs}  
\end{center}
we obtain a position without any possible move. Our exotic move allows
the token to be passed to the right. The intuition is that it
transforms $\un$-moves into ``one-way'' moves, merely passing the
token along $\un$ without any hope for return. Which has roughly the
same effect as deleting the edge.

Thanks to the exotic move, we have:
\begin{lemma}
  Any subposition of an sn position is again sn.
\end{lemma}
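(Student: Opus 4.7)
The plan is to argue by contradiction. Assume that $U$ is sn and that $U'$ is a subposition of $U$ (i.e.\ a subtree of its underlying tree, containing the token, with inherited labels and polarities). If an infinite play from $U'$ existed, we would lift it step by step to an infinite play from $U$ using normal and exotic moves, contradicting the sn hypothesis on $U$. The lifted play is built by induction, maintaining as invariant that the current $U'$-side position $V_i'$ is a subposition of the current $U$-side position $V_i$, with tokens at the same vertex.

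The engine of the construction is a \emph{move-lifting} claim: given any move $V' \to W'$ (possibly exotic) together with a subposition inclusion $V' \subseteq V$ with matching tokens, produce a move $V \to W$ (possibly exotic) such that $W' \subseteq W$, still with matching tokens. For most moves this is immediate, since the move is entirely determined by the broken edge and its two endpoints, both already present in $V$, and the rest of $V$ remains untouched. This covers the negative moves, the $\plus$-moves, the $\wn$-dereliction and $\wn$-weakening moves, and the exotic move itself.

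The tensor move requires only a little bookkeeping. The split of the active vertex $v$'s non-broken edges into $\Gamma'_1, \wn \Theta', \Gamma'_2$ in $V'$ is extended to a split in $V$ by distributing the extra edges at $v$: arbitrary extras may go into $\Gamma_1$ or $\Gamma_2$, and extra $\wn$-outputs may additionally go into $\wn \Theta$. Playing the tensor move on $V$ with this enlarged split yields a $W$ whose two new vertices each contain the corresponding new vertex of $W'$, so the invariant propagates.

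The crucial case---for which the exotic move was introduced in the first place---is the $\un$-move, whose legality requires the $\un$-edge to be sole-adjacent at the active vertex. In $V'$ this may hold while, in $V$, the same vertex $v$ carries additional edges not present in $V'$. If $v$ carries no extras in $V$, the usual $\un$-move is legal there; otherwise we play the exotic move on $V$ instead, turning the $\un$-edge into a $\zero$-edge and passing the token to the neighbour, while $v$ and its extras remain in place. In both subcases, the neighbour's subtree $\Delta$ is preserved in $W$ and coincides with $W'$, with tokens on the same vertex. The main obstacle is thus packaged by design into this single case. Iterating the move-lifting along the assumed infinite play on $U'$ produces an infinite play on $U$, contradicting its sn-ness.
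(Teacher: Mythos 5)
Your proof is correct and follows exactly the route the paper intends (the paper only sketches it in the discussion preceding the lemma): lift each move on the subposition to a move on the ambient position, using the exotic move precisely to handle a $\un$-move whose active vertex has extra adjacent edges in the larger position. Your treatment of the tensor split and of the two subcases of the $\un$-move fills in the details consistently with that sketch.
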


Now, consider a tuple $U = (U, v, a_1, \ldots, a_n)$ where $U$ is a
position with marked vertices $a_1, \ldots, a_n$, and $v$ is the
vertex owning the token. Consider also a tuple $V = ((V_1, b_1),
\ldots, (V_n, b_n))$ of positions ($b_i$ is the node having the
token), and a formula $A$.  From these data, we build a new position
$W = W(U, V, A)$ by taking the union of $U$ with the $V_i$'s, and
adding $n$ new edges from $a_i$ to $b_i$ for each $i$, labelling these
with $A$ (which could be negative), and putting the token at $v$.

Using this notation, the theorem follows (by an easy induction on the
number of vertices) from the following lemma:
\begin{lemma}
If $U$ and the $V_i$'s are sn, then so is $W$.
\end{lemma}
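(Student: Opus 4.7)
The plan is to adapt David and Nour's reducibility technique~\cite{David01,David03} to our graphical setting, arguing by principal induction on the formula $A$ labelling the glue edges. Since a tensor move on an $A$-edge duplicates the $\wn \The$ context of the active vertex and leaves two new glue edges bearing the strict subformulae $A_1$ and $A_2$, the statement first needs to be strengthened so that the gluing may attach edges carrying any family of subformulae of $A$, rather than a single uniform label; otherwise the induction hypothesis cannot be reinvoked after such a move.

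Consider any play $\pi$ from $W$. Each of its moves is either \emph{local} (acting on an edge internal to some $U$ or $V_i$) or \emph{glue-breaking} (acting on an $A$-edge). Local moves project to legal moves of the corresponding sn component; the only delicate case is a $\un$-action at a marked vertex, which has extra adjacent glue edges in $W$ and must be interpreted as an exotic move of the component --- this is precisely the reason why the exotic move was built into the definition of sn. The sn hypothesis on each component therefore forbids an infinite suffix of local moves, so an infinite $\pi$ would contain infinitely many glue-breaking moves, and the principal induction reduces to checking that each glue-breaking move produces a position again covered by the strengthened IH, with either a strictly smaller principal formula or a strictly smaller sn measure on the components.

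Case analysis on the principal connective of $A$ is routine for the multiplicative units, the additives, par, and the negative rules: the glue edge either disappears or is relabelled by strict subformulae of $A$, so the strengthened IH applies immediately. The tensor case $A = A_1 \tens A_2$ is exactly where the strengthened statement pays off: the duplicated $\wn \The$ subtrees are subpositions of the sn position $U$, hence sn themselves by the preceding sublemma, and the two new glue edges labelled $A_1$ and $A_2$ fit the strengthened IH.

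The main obstacle will be the exponential case $A = \wn B$: a single glue-breaking move replaces the edge by one labelled $\tens^n B$ for an arbitrary $n \in \nat$ chosen by the active player, so the game graph branches unboundedly at that point. The key observation is that, along any individual play, $n$ is fixed the moment the move is played, and the subsequent tensor decompositions reduce the fresh edge to $n$ glue edges each labelled $B$, which is a strict subformula of $\wn B$. Hence the strengthened principal IH closes the argument branch by branch, and the lemma --- and with it Theorem~\ref{thm:finite} by the announced induction on the number of vertices --- follows.
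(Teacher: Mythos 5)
Your overall strategy is close to the paper's: a well-founded induction in which the formula labelling the glue edges is the dominant component, an sn-reduction order on positions handles moves internal to a component, and the exotic move is used exactly as you say, to make $\un$-actions at marked vertices project onto legal moves of the component. However, there is a genuine gap in the exponential case, and it is located precisely in your proposed strengthening. You strengthen the statement so that glue edges may carry arbitrary (duals of) \emph{subformulae} of $A$; but the move on a glue edge labelled by a $\wn$-formula replaces it by an edge labelled $\tens^{n} B$, which is \emph{not} a subformula of the original label. The position reached immediately after this move must itself be proved sn, and arbitrary other moves may intervene before the $n$-fold tensor is decomposed, so you cannot argue ``branch by branch'' that the fresh edge will eventually be reduced to $n$ edges labelled $B$: the intermediate positions simply fall outside your strengthened statement. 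The paper closes this hole not by a subformula strengthening but by equipping formulae with a bespoke well-founded order generated by the clauses ``a formula is greater than its subformulae \emph{and their duals}'', ``$\un$ is greater than $\top$'', ``$\wn C$ is greater than $\parr^{n} C$ (and dually for $\oc$ and $\tens$) for every $n$'', and ``a negative formula is greater than its negation'', and then performs a lexicographic induction on the triple $(A, U, n)$. Your argument needs an ordering with these extra clauses (or an equivalent device) to survive the exponential and unit moves.

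A second, more minor divergence: the paper avoids heterogeneous glue labels altogether. In the tensor case it does not view the successor position as a gluing with two differently labelled edges; instead it re-expresses it by \emph{nesting} the construction, first forming $W' = W(V_j, U', \nop{B})$ with $V_j$ as the new base and then $W_0 = W(W', U'', \nop{C})$, each application having a single uniform label which is a dual of a subformula of $B \tens C$ and hence smaller in the order. This is why the paper never needs your strengthened statement. Note also that your reorganisation --- ``an infinite play must contain infinitely many glue-breaking moves'' --- presupposes that the components reached after a finite prefix are still sn, which is part of what is being proved; the paper's formulation, namely that every one-step successor of $W$ is sn by the lexicographic induction hypothesis, avoids this circularity.
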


\begin{proof}
  We proceed by lexicographic induction on the triple $(A, U, n)$,
  using the following orderings:

  \begin{description}\item[Formulae:] We use the ordering determined
    by the following four rules:
    \begin{itemize}
    \item A formula is greater than each of its subformulae and their
      duals;

    \item $\un$ is greater than $\top$;

    \item A formula $\wn C$ is greater than $ \parr^n C$ for each $n$
      (and similarly for $\oc$ and $\tens$);

    \item A negative formula is greater than its linear negation.

    \end{itemize}

  \item[Positions:] 
    We order sn positions by putting $U \geq V$ iff there is a
    path in the graph of moves, or equivalently a reduction sequence,
    from $U$ to $V$. Since we limit this to sn positions, it is
    well-founded.

  \item[Numbers:] 
    Finally, we use the standard ordering for natural numbers.
  \end{description}
  In order to prove that there is no infinite play starting at $W$, it
  is enough to show that all positions reached from $W$ after one move
  are sn. We proceed by a case analysis on this first move.

  If it is a move inside $U$: $U \to U'$, we apply the induction
  hypothesis to the new position $W_0$. We have to explain how it has
  the required form. Indeed, we have $W_0 = W(U', V', A)$, where:
  \begin{itemize}\item 
    the marked points in $U'$ remain the same, except in the case of a
    $\un$ move, where some $a_i$'s may be deleted, and in the case of a
    $\tens$ move where some $a_i$'s may be duplicated;

  \item $V'$ is like $V$, except in case the move is a $\un$ move,
    where possibly some $V_i$'s have to be deleted, and in case of a
    $\tens$ move, where some $V_i$'s have to be duplicated.
  \end{itemize}
  Hence in the present case, $n$ may increase, but $A$ is constant, and
  $U$ strictly decreases.

  Suppose now $v = a_{j}$ and our move $W \to W_0$ is on the 
  $j$-th $A$ edge between $a_{j}$ and $b_{j}$.

  We treat separately the most delicate case where $A = B \tens C$. 
  The move then looks like:
\begin{center}
  \begin{tabular}[t]{c|c}
    \multicolumn{1}{c}{From} & \multicolumn{1}{c}{To} \\ \hline
    \begin{hs}
      \asequent{c}{0,0} %
      \sequent{r}{2,0} %
      \subtreerotwith{c}{0}{$\Gam$}{above} %
      \subtreerotwith{c}{90}{$\wn \The$}{left} %
      \subtreerotwith{c}{180}{$\Gam'$}{below} %
      \subtreerotwith{r}{-90}{$V_j$}{right} %
      \arrow{c}{r}{$B \tens C$} %
    \end{hs}
    &
    \begin{hs}
      \sequent{u}{0,0.5} %
      \sequent{d}{0,-0.5} %
      \asequent{r}{2,0} %
      \subtreerotwith{u}{0}{$\Gam$}{above} %
      \subtreerotwith{u}{90}{$\wn \The$}{left} %
      \subtreerotwith{d}{90}{$\wn \The$}{left} %
      \subtreerotwith{d}{180}{$\Gam'$}{below} %
      \subtreerotwith{r}{-90}{$V_j$}{right} %
      \arrow{u}{r}{$B$} %
      \arrowd{d}{r}{$C$} %
    \end{hs}\ .
  \end{tabular}
\end{center}
But both positions
\begin{center}
  $U' = {}$\begin{hs}
      \sequent{c}{0,0} %
      \subtreerotwith{c}{0}{$\Gam$}{above} %
      \subtreerotwith{c}{90}{$\wn \The$}{left} %
  \end{hs}
\hfil and \hfil
  $U'' = {}$\begin{hs}
      \sequent{c}{0,0} %
      \subtreerotwith{c}{90}{$\wn \The$}{left} %
      \subtreerotwith{c}{180}{$\Gam'$}{below} %
  \end{hs}
\end{center}
are sn by induction hypothesis: they both have the shape $W (X, Y, A)$
with $X \leq U$ and $Y$ strictly shorter than $V$, because $V_j$ has
been removed. But $V_j$ itself is sn by hypothesis, so by induction
hypothesis again, the position
\begin{center}
  \begin{hs}
      \sequent{u}{0,0.5} %
      \asequent{r}{2,0} %
      \subtreerotwith{u}{0}{$\Gam$}{above} %
      \subtreerotwith{u}{90}{$\wn \The$}{left} %
      \subtreerotwith{r}{-90}{$V_j$}{right} %
      \arrow{r}{u}{$\nop{B}$} %
  \end{hs}
\end{center}
is again sn, since it is obtained as
$W' = W (V_j, U', \nop{B})$. Finally, by induction hypothesis,
$W_0 = W (W', U'', \nop{C})$ is also sn.

  For all other cases, we apply the inductive hypothesis twice: one for
  the new $V$, with smaller $n$, and one for the new $U$, with a
  smaller $A$.%
\qed\end{proof}

By the way, Lemma~\ref{lemma:finite:naive} follows by remarking that any
play in the game of Section~\ref{sec:naive} is simulated by a play in
the new game (with the exotic move~\eqref{eq:new:move}).

\section{Soundness}\label{sec:logic}

So we have a game for \LLTN{}.  Let us now show that it defines a
model of \LLTN{} (and hence of \LL{}).  The definition of (winning)
strategies is exactly as in Section~\ref{sec:naive},
thanks to Theorem~\ref{thm:finite}. 
As in \Section{sec:naive}, we put:

\begin{definition}
  A formula $A$ is \emph{\GLLTN{}-valid} (again, G stands for
  ``Game'') iff there exists a tree $U$ such that for all trees $V$,
  the vertices of $U$ have a winning strategy against those of $V$ in
  the graph
  \begin{center}
    \begin{hs}
      \player{l}{0,0} %
      \opponent{r}{2,0} %
      \subtreerotwith{l}{90}{$U$}{left} %
      \subtreerotwith{r}{-90}{$V$}{right} %
      \arrow{l}{r}{$A$} %
    \end{hs}
  \end{center} wherever the token is placed initially.
\end{definition}

We immediately have consistency:
\begin{theorem}\label{thm:consistency}
  A formula and its dual are not both \GLLTN{}-valid.
\end{theorem}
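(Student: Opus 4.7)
The plan is a confrontation argument. Suppose, for contradiction, that both $A$ and $\nop{A}$ are \GLLTN{}-valid, with respective tree witnesses $U_A$ and $U_{\nop{A}}$. Specializing each validity hypothesis to the opposite witness (taking $V = U_{\nop{A}}$ on one side and $V' = U_A$ on the other) yields two positions with the same underlying labelled directed graph: a single $A$-edge joining the roots of $U_A$ and $U_{\nop{A}}$, extended on each side by the corresponding subtree. Call them $W_1$ and $W_2$: in $W_1$ every vertex of $U_A$ is a Proponent and every vertex of $U_{\nop{A}}$ is an Opponent, with a winning strategy $\sigma_A$; in $W_2$ the polarities are swapped, with a winning strategy $\sigma_{\nop{A}}$.

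Next, I would fix an arbitrary initial token placement (say at the root of $U_A$) and inductively construct a single sequence of moves by alternating between the two strategies according to which subtree currently owns the token: when the active vertex lies in $U_A$ (Proponent in $W_1$) pick the move using $\sigma_A$; when it lies in $U_{\nop{A}}$ use $\sigma_{\nop{A}}$. Because moves are defined independently of vertex polarity (Section~\ref{subsec:moves}), this same sequence is realised both as a play from $W_1$ and as a play from $W_2$. A straightforward induction, invoking welcomingness of each strategy precisely at the steps where the active vertex is Opponent in its view, shows that the sequence lies in $\sigma_A$ (read from $W_1$) and in $\sigma_{\nop{A}}$ (read from $W_2$). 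Moreover, winningness of each strategy rules out any maximal play (within that strategy) whose active vertex is Proponent, so the construction can always be extended whenever the game itself admits any move at all.

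Theorem~\ref{thm:finite} then guarantees that the sequence terminates at some position $W'$ from which no game move exists; the resulting play is therefore maximal in both $\sigma_A$ and $\sigma_{\nop{A}}$. The winning condition applied to $\sigma_A$ forces the active vertex at $W'$ to be Opponent in $W_1$, hence to sit in $U_{\nop{A}}$; applied to $\sigma_{\nop{A}}$ it forces the same vertex to be Opponent in $W_2$, hence to sit in $U_A$. Since $U_A$ and $U_{\nop{A}}$ are disjoint subtrees of $W'$, this is the desired contradiction.

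The main obstacle I expect is the bookkeeping to confirm that a single sequence of moves is genuinely a play in both strategies, despite their living over positions with different polarity assignments, and that welcomingness applies at exactly those steps where the ``other'' strategy is choosing the move. Once this correspondence is made precise using the polarity-agnosticism of moves and the fact that strategies are stable under isomorphism, the contradiction between the two winning conditions at the terminal position follows immediately from the disjointness of $U_A$ and $U_{\nop{A}}$.
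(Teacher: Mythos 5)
Your proposal is correct and follows essentially the same route as the paper: confront the two winning strategies on the common position obtained by joining the two witness trees along an $A$-edge, use Theorem~\ref{thm:finite} to obtain a maximal play common to both strategies (welcomingness guaranteeing the two strategies can be interleaved), and derive a contradiction from the two winning conditions at the final token-holder. The paper phrases this more tersely as taking a maximal play in the intersection $S \cap S'$, but the underlying argument is the same.
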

\begin{proof}
  Given two trees $U$ and $V$ as above, and strategies $S$ and $S'$
  for them, they both contain the empty play so $S \cap S'$ is
  non-empty. By Theorem~\ref{thm:finite}, we have a maximal play in $S
  \cap S'$. In its final position, the token is held by some vertex
  $v$. If $v$ is a \proponent{} then $S$ is not winning, otherwise $S'$ is
  not winning.
\qed\end{proof}
As a consequence, since \LLTN{} has a consistent model, we have:
\begin{corollary}
  \LLTN{} is consistent.
\end{corollary}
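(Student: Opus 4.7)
The plan is to combine Theorem~\ref{thm:consistency} with a soundness statement for \LLTN{}, namely that any formula admitting a bounded proof in \LLTN{} is \GLLTN{}-valid. This is the \LLTN{}-analogue of Proposition~\ref{proposition:soundness:naive} and is the promised content of the soundness theorem of this section. Granting it, the corollary is immediate: if \LLTN{} were inconsistent, some formula $A$ and its dual $A^\bot$ would both admit bounded proofs, hence both would be \GLLTN{}-valid, contradicting Theorem~\ref{thm:consistency}.

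The bulk of the work lies in the soundness statement, which I would prove by induction on a bounded proof $\pi$. In the cut-free case, I would induct on $\pi$, showing that for each inference rule of Figure~\ref{fig:sequent}, winning strategies for the premises yield a winning strategy for the conclusion: negative rules amount to waiting for the opponent to break the introduced formula and then following the appropriate premise strategy; positive rules amount to \Proponent{} issuing the corresponding move and then following the premise strategy. The \NewTens{} rule requires the $\wn \The$ context to be duplicated on the \Proponent{} side, which is handled by cloning the relevant subtree; stability of strategies under isomorphism ensures that the two copies receive coherent replies. The $\with$ rule is handled by having \Proponent{} be ready to defend either branch chosen by the opponent. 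For the root cut in shape~\eqref{eq:bounded}, we glue the two sub-strategies along a fresh $A$-labelled edge placed between an auxiliary \Proponent{} and opponent vertex, which is exactly the shape called for in Definition~\ref{def:valid}.

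The main obstacle I foresee is the interaction of the infinitary $\oc$-rule with the tensor rule. The infinitary $\oc$-introduction forces the strategy to anticipate every $n$ the opponent might choose when breaking a $\wn A^\bot$ into $\tens^n A^\bot$; but because Theorem~\ref{thm:finite} guarantees that each individual play is finite, we only need to provide a winning continuation for each $n$ separately, and the infinitely many premises of the $\oc$-rule supply exactly this. The tensor rule may further duplicate subtrees containing $\wn$-formulae, so the bookkeeping needed to transport strategies through such duplications is delicate but entirely parallel to the analysis already underpinning Theorem~\ref{thm:finite}, notably the $W(U,V,A)$ construction used there.
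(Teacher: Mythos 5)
Your proposal is correct and follows the paper's own route: the corollary is obtained by combining the consistency of the game model (Theorem~\ref{thm:consistency}) with the soundness of \LLTN{} bounded provability with respect to \GLLTN{}-validity (Theorem~\ref{thm:soundness}). Your sketch of the soundness argument also matches the paper's, which handles the cut-free case via Lemma~\ref{lemma:winning} (the strategy follows the proof, using totality and Lemma~\ref{lemma:total}) and the root cut via the two-vertex \Proponent{} team joined by an $A$-labelled edge.
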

We now prove:
\begin{theorem}\label{thm:soundness}
  \GLLTN{}-validity is a model of \LLTN{} (bounded) provability.
\end{theorem}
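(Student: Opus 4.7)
The plan is to prove this by induction on bounded proofs, after first strengthening the statement into a lemma about cut-free proofs of arbitrary (multi-formula) sequents. Concretely, I would prove: for every cut-free \LLTN{} proof $\pi$ of $\Gam = A_1, \ldots, A_n$ and every tuple $(V_1, b_1), \ldots, (V_n, b_n)$ of opponent trees with an arbitrarily placed initial token, the star-shaped position with central \Proponent{} $p$ joined to each $b_i$ by an edge labelled $A_i$ admits a winning strategy $\sigma_\pi$. Cut-freeness is preserved by bounded structure: a bounded proof of $A$ is either itself cut-free, in which case the lemma applied with $n=1$ and $U$ the one-vertex tree gives \GLLTN{}-validity directly, or it is a single initial cut, which is handled in a second step.

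The cut-free lemma is established by induction on $\pi$, with each inference rule supplying the strategy's response pattern. For a positive rule with principal formula $A_i$, whenever the token reaches $p$ with the $A_i$-edge exposed as positive output, $\sigma_\pi$ plays the corresponding move in the game of Figure~\ref{fig:moves} and recurses using the induction hypothesis on the premises. For the \NewTens{} case, $p$ splits its non-$A_i$ edges into $\Gam$, $\Gam'$ and the shared $\wn\The$ block, duplicates the $\wn\The$-attached opponent subtrees accordingly, and invokes the IH on both premises. For the $\oc$-rule with its infinitely many premises, when an opponent's exponential move relabels the edge $\wn A_i^{\bot}$ into $\tens^n A_i$, $\sigma_\pi$ branches to the strategy extracted from the $n$-th premise. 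Negative rules dictate how $p$ reacts when opponents break its incoming edges; the exchange, $\bot$-weakening, and $\top$-cases are essentially bookkeeping. Welcomingness is built in by letting opponents make any legal move at each stage and then consulting $\pi$ at the resulting subproof.

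For the bounded case of an initial cut between $\pi_1$ of $B$ and $\pi_2$ of $B^\bot, A$, take $U$ to be the two-vertex \Proponent{} tree with $u_1, u_2$ connected by an edge labelled $B$. For any opponent tree $V$ attached to (say) $u_2$ via edge $A$, apply the star lemma to $\pi_1$, treating $u_2$ together with $V$ as the single opponent tree on the $B$-side of $u_1$, to obtain $\sigma_1$; apply it to $\pi_2$, treating $u_1$ as the opponent on the $B^\bot$-side of $u_2$ and $V$ as the opponent on the $A$-side, to obtain $\sigma_2$. The combined strategy dispatches by token location: moves inside the $u_1$-side follow $\sigma_1$, moves inside the $u_2$-side follow $\sigma_2$, and a move on the central $B$-edge is, in each component's local view, the opponent-initiated dual of the other's move, so consistency is automatic. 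Termination is guaranteed by Theorem~\ref{thm:finite} (so the strategy has maximal plays), and \emph{winningness} follows because a maximal play of the combined strategy projects to a maximal play in at least one of the $\sigma_i$, which by construction ends with the token on an opponent.

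The main obstacle I expect is the tensor/promotion interaction in the star lemma. The \NewTens{} rule carries a shared $\wn\The$ context whose image in the game is literally duplicated (each tensor premise gets its own copy, along with a duplicate of each attached opponent tree), so one must verify that the two new strategies provided by the IH really compose into a welcoming, isomorphism-stable strategy on the new star. Handled with care, this matches the game move exactly; but it is where bookkeeping on the marked vertices $a_i$ and the opponent tuple $V$ (the very setup used in the proof of Theorem~\ref{thm:finite}) is heaviest. A lesser subtlety is the infinitary $\oc$ rule: the strategy branches over $n$, and one must confirm that at each such branching the IH delivers a welcoming substrategy, so that the global strategy remains a legitimate set of plays closed under prefix and isomorphism.
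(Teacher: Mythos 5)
Your overall plan is correct and reaches the same conclusion, but it is organized differently from the paper's proof. The paper's key lemma (Lemma~\ref{lemma:winning}) is stated for an \emph{arbitrary} position in which \emph{every} \proponent{} vertex carries a cut-free proof of its sequent; the strategy is then read off by consulting, at each step, the first rule of the proof held by the token's owner, using the observation that any opponent move only modifies a \proponent{}'s sequent ``anodynely'' (so a cut-free proof can be maintained), and winningness follows from totality (Lemma~\ref{lemma:total}). With that formulation the bounded case is immediate: the two-vertex position for the initial cut already satisfies the lemma's hypothesis, and no strategy composition is needed in the soundness proof at all. Your version instead proves the lemma only for star-shaped positions with a single central \proponent{}, which forces you to compose strategies by intersection -- essentially the three-team lemma the paper only introduces later for Theorem~\ref{thm:cut}. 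Your justification of that composition (totality of each $\sigma_i$ plus welcomingness of the other yields totality, hence winningness, of the intersection) is sound, though your phrasing ``projects to a maximal play in at least one of the $\sigma_i$'' should really be the totality argument. The one point you underestimate is that the same issue already arises \emph{inside} your induction: after a \NewTens{} move the position has two \proponents{} attached to a common opponent, so it is no longer a star, and the induction hypothesis does not apply to it directly -- you must either invoke the intersection lemma there as well (treating each new \proponent{}'s sibling as part of its opponent tree, which works since the star lemma quantifies over all opponent trees and strategies are welcoming) or generalize the invariant to ``every \proponent{} carries a cut-free proof,'' which is exactly the paper's move and is closed under all game moves. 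Your route works, at the cost of threading the composition lemma through the induction; the paper's formulation buys a cleaner invariant and a shorter proof.
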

We start with:
\begin{lemma}\label{lemma:total}
  A strategy is winning iff it is \emph{total}, i.e., for each reached
  position where a \proponent{} holds the token, it has an extension by a
  move.
\end{lemma}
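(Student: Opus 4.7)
The proof is a direct double implication from the definitions, and the key observation is that welcomingness already forces the right behaviour on opponent positions: if the token is held by an opponent and any move exists at all, that move lies in $\strat$, so any maximal play in $\strat$ ending at an opponent-held position must have no available move at all. Consequently, the only way a play $p \in \strat$ can be maximal with the token held by a \proponent{} is that $\strat$ itself fails to offer an extension of $p$. This is exactly what totality forbids.

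For the $(\Rightarrow)$ direction, suppose $\strat$ is winning and consider a play $p \in \strat$ whose final position $P$ has a \proponent{} $v$ holding the token. If no extension of $p$ lies in $\strat$, then $p$ is a maximal play in $\strat$ ending at $P$, so by definition of winning the token is held by an opponent -- contradicting the assumption on $v$. Hence some extension of $p$ is in $\strat$, so $\strat$ is total.

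For the $(\Leftarrow)$ direction, suppose $\strat$ is total and consider any maximal play $p \in \strat$ with final position $P$. If the token at $P$ were held by a \proponent{}, totality would yield an extension of $p$ in $\strat$, contradicting maximality. Hence the token is held by an opponent, so $\strat$ is winning.

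There is no real obstacle: the only subtlety is checking that welcomingness rules out the spurious case of a maximal play ending at an opponent-held position with moves still available, but this is immediate since welcomingness requires every such opponent move to already be in $\strat$. The finiteness of plays (Theorem~\ref{thm:finite}) is not needed here -- it only matters later when one wants maximal plays to exist for consistency arguments -- but its presence is what makes the notion of ``maximal play'' behave sensibly in subsequent applications of this lemma.
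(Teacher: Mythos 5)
The paper states this lemma without proof, treating it as immediate from the definitions; your direct unfolding of \emph{winning}, \emph{maximal play}, and \emph{total} is exactly the intended argument, and both directions are correct. Your opening aside about welcomingness is true but not actually needed for either implication (the definitions of winning and total only ever constrain positions where the token sits with the other side in the relevant way), and your closing remark on Theorem~\ref{thm:finite} is a fair observation about where finiteness does and does not enter.
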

We then observe:
\begin{lemma}\label{lemma:winning}
  For any position $U$, equipped with, for each \proponent{} vertex $v
  \in U$, a cut-free \LLTN{} proof $\pi_v$ of its sequent, there exists
  a winning strategy on $U$.
\end{lemma}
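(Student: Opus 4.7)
The plan is to build the winning strategy by following the given proofs rule-by-rule, maintaining throughout the play the invariant that every proponent $v$ still possesses a cut-free \LLTN{} proof $\pi_v$ of its current sequent. Given the invariant, whenever $v$ holds the token, it plays the move dictated by the root rule of $\pi_v$: if that rule introduces a positive formula $F$ of $v$'s sequent, $v$ plays the positive move on the output edge labelled $F$; if it introduces a negative formula $\nop{F}$, $v$ plays the negative move along the input edge labelled $F$. Opponent moves are welcomed unconditionally. The invariant holds initially by hypothesis.

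Preservation of the invariant is a case analysis on the move performed. A negative move (whoever plays it) only shifts the token and leaves all sequents unchanged, so nothing needs updating. Consider now a positive move on an edge labelled $F$ between vertices $w$ and $w'$. On $w$'s side, the resulting sequents are exactly the premises of the corresponding positive rule for $F$, so if $w$ is a proponent playing per strategy, the immediate subproofs of $\pi_w$ are the new cut-free proofs (in the tensor case, the two split proponents $w_1, w_2$ get one subproof each, and the vertices shared via the duplicated $\wn \Theta$ subtree simply re-use their former proofs since their sequents are unaffected). On $w'$'s side, the dual formula $\nop{F}$ gets decomposed, and if $w'$ is a proponent, a new cut-free proof of its updated sequent is obtained by \emph{inverting} the negative rule for $\nop{F}$ in $\pi_{w'}$. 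Proponents incident to neither $w$ nor $w'$ are untouched.

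The technical crux is therefore invertibility of the negative rules of \LLTN{} on cut-free proofs: for each of $\bot$, $\top$, $\parr$, $\with$, and $\oc$, a cut-free proof of $\Gam, \nop{F}$ should yield cut-free proofs of the associated premises, uniformly in the decomposition chosen on the other side. The first four cases are handled by routine structural inductions on cut-free proofs, as in standard \LL{}. The case of $\oc$, whose right rule is infinitary, is the subtler one and I expect it to be the main obstacle; however, since $\oc A$ is treated as $\bigwith_{n \in \nat} \tens^n A$, its invertibility essentially reduces to that of $\with$: one selects, for each $n$, the cut-free subproof of $\Gam, \tens^n A$.

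Finally, by Lemma~\ref{lemma:total}, it suffices to verify totality: at any reached position where a proponent $v$ holds the token, the invariant supplies a cut-free proof $\pi_v$ whose root rule introduces a formula of $v$'s sequent and thereby prescribes a legal move, so the strategy extends. By construction the resulting set of plays is prefix-closed, contains the empty play, welcomes opponent moves, and is stable under isomorphism, so it is indeed a winning strategy.
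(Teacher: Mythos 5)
Your proposal is correct and follows essentially the same route as the paper: the strategy plays the move prescribed by the root rule of each \proponent{}'s cut-free proof, welcomes opponent moves, and maintains cut-free proofs by taking immediate subproofs on the active side and by invertibility of the negative rules ($\bot$, $\parr$, $\with$, and the infinitary $\oc$ read as projections) on the passive side --- the paper packages exactly these inversions as admissible ``anodyne'' modifications preserving cut-freeness --- before concluding via totality and Lemma~\ref{lemma:total}. No substantive difference.
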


\begin{proof}
  Before constructing such a strategy, let us observe that the
  following rules are admissible in \LLTN{}, and moreover for any
  cut-free proof of their premise, there is a cut-free proof of their
  conclusion:
  \begin{mathpar}
    \inferrule{\Gam, \bot}{\Gam} \and
    \inferrule{\Gam, A \parr B}{\Gam, A, B} \and
    \inferrule{\Gam, A \with B}{\Gam, A} \and
    \inferrule{\Gam, A \with B}{\Gam, B} \and
    \inferrule{\Gam, \oc A}{\Gam, {\tens^{n}} A}~\cdot
  \end{mathpar}
  We say that going from premise to conclusion in one of these rules is
  an \emph{anodyne} modification.

  Assume now given a position $U$, and a cut-free proof $\pi_v$ for
  each \proponent{} $v \in U$, and let us construct a winning strategy
  on $U$. We describe this by defining a new graph $\G$, which embeds
  into the graph of positions and moves. It has as vertices all
  positions $U$ equipped with a cut-free proof $\pi_v$ for each
  \proponent{} $v \in U$. We now define edges from any such $U$.

  First assume the token is held by opponents in $U$. After each move
  $U \to V$, if the token is still held by opponents in $V$, then
  \proponent{} sequents remain unchanged, and we add an edge $U \to V$
  in $\G$. On the other hand, if the token is now held by a
  \proponent{}, the corresponding sequent may have been anodynely
  modified. As observed above, we still may choose a cut-free proof
  for this sequent, as well as for the unmodified ones, and we add an
  edge $U \to V$ in $\G$.

  Now if the token is held by \proponents{} in $U$, we consider the
  first step of the proof of the involved sequent.  If it is a
  positive rule, we consider the corresponding move $U \to V$. In $V$,
  we still have cut-free proofs for all \proponents{}' sequents, and
  we add an edge $U \to V$ in $\G$.  If the proof starts with a
  negative rule, we add as an edge in $\G$ the move passing the token
  along the corresponding formula (in the new position, \proponents{}'
  sequents are unchanged, hence we still have cut-free proofs for
  them).
  
  This graph $\G$ determines a strategy on each of its vertices,
  since it accepts all opponent moves. And furthermore this strategy
  is total thanks to proofs, hence it is winning by
  \Lemma{lemma:total}.  
\qed\end{proof}

\begin{proof}[Proof of Theorem~\ref{thm:soundness}] Consider any
  formula $B$ with a bounded proof in \LLTN{}.  We
  have to choose a graph $U$ such that for all $V$, vertices in the
  position
  \begin{equation}
      \begin{hs}
    \player{l}{0,0} %
    \opponent{r}{2,0} %
    \subtreerotwith{l}{90}{$U$}{left} %
    \subtreerotwith{r}{-90}{$V$}{right} %
    \arrow{l}{r}{$B$} %
  \end{hs}\label{eq:poss}
\end{equation} 
have a winning strategy.

If the given proof of $B$ is cut-free, then Lemma~\ref{lemma:winning}
provides a winning strategy for $U = \Pl$, the single vertex (for
any $V$ and initial placement of the token).

  Otherwise the bounded proof has the shape
  \begin{mathpar}
      \inferrule{
    \inferrule{\pi_1}{A} \\
    \inferrule{\pi_2}{A^\bot, B} \\
  }{B.}
  \end{mathpar}
  We then choose $U$ to be
  \begin{center}
    \begin{hs}
      \player{l}{0,0} %
      \player{r}{2,0} %
      \arrow{l}{r}{$A$} %
    \end{hs}
  \end{center}
  so that for any $V$, the position~\eqref{eq:poss} becomes
  \begin{center}
    \begin{hs}
      \player{ll}{-2,0} %
      \player{l}{0,0} %
      \opponent{r}{2,0} %
      \subtreerotwith{r}{-90}{$V.$}{right} %
      \arrow{l}{r}{$B$} %
      \arrow{ll}{l}{$A$} %
    \end{hs}
  \end{center}  
  In this position, we have cut-free proofs ($\pi_1$ and $\pi_2$) for
  each proponent vertex, hence a winning strategy by
  \Lemma{lemma:winning} (for any placement of the token).
\end{proof}

However, we have:
\begin{proposition}\label{proposition:game:incomplete}
  \GLLTN{}-validity is complete neither w.r.t.\ \LL{}, nor w.r.t.\ \LLTN{}.
\end{proposition}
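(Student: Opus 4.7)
I plan to use $A = \bot \otimes \bot$, as already suggested by the abstract, as a single counterexample establishing both incompleteness statements simultaneously: it is \GLLTN{}-valid yet unprovable in both \LL{} and \LLTN{}.

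First I would argue unprovability in \LL{}: by cut elimination any proof admits a cut-free form, which (since no other rule concludes $\bot \otimes \bot$ alone) must end with \NewTens{} applied with empty $\Gam, \Del, \wn \The$, reducing the task to cut-free proofs of $\vdash \bot$. But $\vdash \bot$ is not provable, since the $\bot$ rule demands a non-empty premise and no other rule concludes $\bot$ alone. The same rule inspection applies in \LLTN{}: a bounded proof of $\vdash \bot \otimes \bot$ is either cut-free (same analysis as for \LL{}) or cut-headed with premises $\vdash A$ and $\vdash \nop{A}, \bot \otimes \bot$ cut-free; a case analysis of the tensor splits in the latter forces $\vdash \bot$ as a leaf for any choice of $A$ (even with the infinitary $\oc$ rule: no rule concludes $\vdash \bot$).

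Next, for \GLLTN{}-validity, I would take $U$ to be the single vertex~\Pl{} and prescribe the strategy: whenever the central \Pl vertex holds the token, play the positive tensor move on $\bot \otimes \bot$; on \Pl vertices arising from $V$, play any legal move; welcome all \Op moves. Since $U$ is trivial, the tensor move gives $\Gam = \Gam' = \wn \The = \emptyset$ and splits \Pl into two \Pl-leaves $v_1, v_2$, each carrying a single edge labeled $\un$ as input from the adjacent \Op vertex.

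The main obstacle is verifying that this strategy is winning. The only way \Pl loses is for the token to reach $v_i$, which occurs only when the \Op-polarity neighbour of $v_i$ plays its positive $\un$ move on the edge to $v_i$, which requires that edge to be its sole adjacent edge. I would establish the invariant that, throughout any reachable play, the \Op neighbour of each $v_i$ has degree at least two: it holds initially (the central \Op is adjacent to both $v_1$ and $v_2$); tensor splits by \Op redistribute edges but introduce replacement edges $A, B$ connecting to the target vertex, keeping the degree of any new \Op vertex carrying an $\un$ edge to $v_i$ at least two; the exponential, $\plus$, and negative moves do not erase edges; and a $\un$ move elsewhere can only delete a vertex whose sole edge is not the neighbour of $v_i$. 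The delicate part is chain reactions in which \Pl vertices in $V$ are forced to play $\un$ and vanish, potentially eroding the \Op side behind $v_i$; I expect to handle these by strengthening the invariant to track the entire connected \Op-subgraph behind each $v_i$, or else by slightly enlarging $U$ with a shielding $\wn$-subtree that gets duplicated in the tensor split. By Theorem~\ref{thm:finite} every play terminates, so the invariant forces the terminal token placement onto an \Op vertex, whence the strategy wins.
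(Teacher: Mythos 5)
Your counterexample and overall strategy coincide with the paper's: the paper's entire proof is the one-liner that, ``exactly as in the game for MALL,'' $\bot\tens\bot$ is \GLLTN{}-valid but provable neither in \LL{} nor in \LLTN{}. So the choice of $\bot\tens\bot$ and of $U=\Pl$ is exactly right, and your game-theoretic worry does close: since $v_1$ and $v_2$ have no output edges they can never be deleted (the $\un$-move is the only vertex-deleting move and it deletes the \emph{active} vertex, which must have an \emph{output} $\un$ as its sole edge), so the position always remains a tree containing both $v_1$ and $v_2$; the path between them therefore persists forever, every interior vertex of that path has degree at least two, and in particular the neighbour of $v_1$ can never fire its $\un$-move towards $v_1$. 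Since the edge into $v_i$ stays labelled $\un$, no other move ever passes the token to $v_i$ either, so the strategy is total for \Proponent{} and hence winning. (Minor point: by Definition~\ref{def:valid} every vertex of $V$ is an opponent, so your clause about \Proponent{} vertices arising from $V$ is vacuous.)

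The genuine gap is in the unprovability argument for \LLTN{}. Your claim that ``a case analysis of the tensor splits \ldots{} forces $\vdash\bot$ as a leaf for any choice of $A$'' is false as stated: for $A=\zero$ the premise $\nop{A},\,\bot\tens\bot$ is $\top,\,\bot\tens\bot$, which is closed immediately by the $\top$-rule with no $\bot$ leaf at all. The real obstruction in that case is the \emph{other} premise $\vdash\zero$, and in general the cut formula $A$ ranges over all formulae and the cut-free proof of $\nop{A},\bot\tens\bot$ may decompose $\nop{A}$ arbitrarily before ever touching $\bot\tens\bot$, so a local inspection of the last rule does not go through. What you need is an invariant preserved by \emph{all} \LLTN{} rules including cut and the infinitary promotion, e.g.\ the Boolean valuation $v(\un)=v(\top)=1$, $v(\bot)=v(\zero)=0$, $v(A\tens B)=v(A\with B)=v(A)\wedge v(B)$, $v(A\parr B)=v(A\plus B)=v(A)\vee v(B)$, $v(\wn A)=v(\oc A)=v(A)$, with a sequent counted valid when some member has value $1$: every rule of Figure~\ref{fig:sequent} preserves validity (for cut, use $v(\nop{A})=1-v(A)$), while $v(\bot\tens\bot)=0$. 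This disposes of both \LL{} and \LLTN{} at once and replaces your cut-free analysis for \LL{} as well.
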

\begin{proof}
  Indeed, exactly as in the game for MALL~\cite{H3:mall:long}, $\bot
  \tens \bot$ is \GLLTN{}-valid but provable neither in \LL{} nor in \LLTN{}.
\qed\end{proof}
We expect that \emph{local} strategies~\cite{H3:mall:long} will remedy
incompleteness w.r.t.\ \LLTN{}.

\section{Admissibility of cut}\label{sec:cut}
We now define a notion of \GLLTN{}-validity for arbitrary sequents, which
extends that for formulae, and prove that the cut rule holds in our
model, i.e., if two sequents $\Gam, A$ and $A^\bot, \Del$ are \GLLTN{}-valid,
then so is $\Gam, \Del$.

\begin{definition}
  A sequent $\Gam = (A_1, \ldots, A_n)$ is \emph{\GLLTN{}-valid} iff
  there is a position $U$ with dangling edges $A_1, \ldots, A_n$, such
  that for all tuples of positions $U_1, \ldots, U_n$, the vertices in
  $U$ have a winning strategy $\strat$ on the position
  \begin{center}
    \begin{hs}
      \node[draw] (U) at (0,0) {$U$} ; %
      \node[draw] (U1) at (-1,-1) {$U_1$} ; %
      \node (dots) at (0,-1) {$\dots$} ; %
      \node[draw] (Un) at (1,-1) {$U_n$} ; %
      \arrowal{U}{U1}{$A_1$} %
      \arrowar{U}{Un}{$A_n$} %
    \end{hs}
  \end{center}
  We say that $U$ is its \emph{base} position.
\end{definition}

\begin{theorem}\label{thm:cut}
  The cut rule holds, i.e., if two sequents $\Gam, A$ and $A^\bot,
  \Del$ are \GLLTN{}-valid, then so is $\Gam, \Del$.
\end{theorem}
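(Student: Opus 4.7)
The plan is to build the base position for $\Gamma,\Delta$ by gluing $U$ and $U'$ along the cut formula, and to combine the two given winning strategies by intersection. Concretely, let $U$ witness validity of $\Gamma,A$ with the $A$-dangling edge incident to some $a \in U$, and let $U'$ witness validity of $A^\bot,\Delta$ via some $a' \in U'$. Form $W$ by taking the disjoint union $U \sqcup U'$ and identifying the two dangling edges into a single internal $A$-edge from $a$ to $a'$; the remaining dangling edges of $W$ are exactly those for $\Gamma,\Delta$, so $W$ is a candidate base for $\Gamma,\Delta$.

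Given arbitrary attachments $V_1,\ldots,V_m$ on the $\Gamma$-edges and $V'_1,\ldots,V'_k$ on the $\Delta$-edges, let $G$ be the full resulting position. The same tree $G$ can be viewed in two ways: as the extension of $U$ by the $V_i$'s together with the glued position $V^\star = U' \cup V'_1 \cup \ldots \cup V'_k$ attached along $A$ (view $G_1$); or, symmetrically, as the extension of $U'$ by the $V'_j$'s and the glued position $U \cup V_1 \cup \ldots \cup V_m$ attached along $A^\bot$ (view $G_2$). The hypotheses then supply winning strategies $\sigma_1$ on $G_1$ and $\sigma_2$ on $G_2$, which we regard interchangeably as sets of plays in $G$. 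I then set $\sigma = \sigma_1 \cap \sigma_2$.

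To verify that $\sigma$ is the desired winning strategy: nonemptiness, prefix-closure, and isomorphism-stability are inherited from the $\sigma_i$'s. For welcoming, any move from an opponent vertex $w$ of $G$ must lie in some $V_i$ or $V'_j$ (since $W = U \cup U'$ consists of proponents by construction); as $w$ lies outside both $U$ and $U'$, such a move is welcomed by both $\sigma_1$ and $\sigma_2$, hence by $\sigma$. For winning, let $p \in \sigma$ be maximal and suppose for contradiction that its final token-holder $v$ is a proponent; then $v \in U$ or $v \in U'$, say $v \in U$. Since $\sigma_1$ is winning on $G_1$, it admits some extension $p \cdot m \in \sigma_1$; but the active vertex of $m$ is $v \in U$, which lies outside $U'$, so $\sigma_2$ welcomes $m$, whence $p \cdot m \in \sigma$, contradicting maximality.

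The main subtlety concerns moves acting on the internal $A$-edge, which belong to both $G_1$ and $G_2$ simultaneously, so one must check that no incompatibility can arise between $\sigma_1$ and $\sigma_2$ on them; but any such move has a single active vertex, residing in exactly one of $U$, $U'$, and the non-active strategy welcomes it by the argument above, so there is nothing to reconcile. Finiteness of the plays in $\sigma$ (hence existence of maximal plays) is immediate from Theorem~\ref{thm:finite}.
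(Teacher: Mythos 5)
Your proposal is correct and follows essentially the same route as the paper: glue the two base positions along an $A$-edge, view the resulting full position from each side's perspective, and take the intersection of the two winning strategies. The paper packages the verification into a small lemma about positions with three teams ($P$, $P'$, $O$) and appeals to the characterisation of winning strategies as total ones (Lemma~\ref{lemma:total}); your maximality/welcoming argument is just that lemma's proof spelled out inline.
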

We first consider slightly generalised positions $U$, with three teams
instead of two, say, $P$, $P'$, and $O$. The teams $P$ and $P'$ are to
be thought of as a partition of \proponents{}.  For such a generalised
position $U$, call $(U, P)$ the non-generalised position with $P$ as
\proponents{} and $P' \cup O$ as opponents; and similarly for $(U,
P')$, $(U, P \cup P')$, and $(U, O)$.  We have:
\begin{lemma}
  For a generalised position $U$, given winning strategies $\strat$ on
  $(U, P)$ and $\strat'$ on $(U, P')$, the set of plays $\strat \cap
  \strat'$ is a winning strategy on $(U, P \cup P')$.
\end{lemma}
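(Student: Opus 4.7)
The plan is to verify that $\strat \cap \strat'$ satisfies the four structural axioms of a strategy on $(U, P \cup P')$, and then to check the winning condition by a careful case analysis on the owner of the token at a maximal play.

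First I would dispatch the three easy structural properties: containing the empty play, prefix-closedness, and stability under isomorphism all transfer for free from $\strat$ and $\strat'$, since each is preserved under intersection. The welcoming axiom takes only a line of observation: the opponents of $(U, P \cup P')$ are precisely the vertices in $O$, and every $O$-vertex is also an opponent in both $(U, P)$ (where opponents are $P' \cup O$) and $(U, P')$ (where opponents are $P \cup O$). Hence any move by an $O$-vertex is accepted, by welcomingness, into $\strat$ and into $\strat'$, and therefore into their intersection.

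The main obstacle, and the real content of the lemma, is the winning condition. The subtle point is that a play which is maximal in $\strat \cap \strat'$ need not be maximal in either $\strat$ or $\strat'$ alone. I would argue by contradiction: suppose some maximal $p \in \strat \cap \strat'$ ends at a position $V$ with the token held by $v \notin O$, say $v \in P$ (the case $v \in P'$ is symmetric). Then in $(U, P)$, $v$ is a \proponent{}, and since $\strat$ is winning on $(U, P)$, the play $p$ is not maximal in $\strat$: there is an extension $p \cdot m \in \strat$, where $m$ is a move by $v$. The crucial switch is that in $(U, P')$, the same vertex $v$ lies in $P$, which is part of the opponent team $P \cup O$; so $m$ is an opponent move in $(U, P')$, and welcomingness of $\strat'$ forces $p \cdot m \in \strat'$. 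Therefore $p \cdot m \in \strat \cap \strat'$, contradicting maximality of $p$. Hence the token must lie in $O$, and $\strat \cap \strat'$ is winning on $(U, P \cup P')$.

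The hard step is this last one, because it relies on a genuine interaction between winningness in one subgame and welcomingness in the other, exploiting precisely the fact that the teams $P$ and $P'$ see each other as opponents. Everything else is bookkeeping.
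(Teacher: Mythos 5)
Your proof is correct and follows essentially the same route as the paper's: the paper dismisses the structural axioms in one line and then asserts that $\strat \cap \strat'$ is total, hence winning by Lemma~\ref{lemma:total}, and the totality argument it leaves implicit is exactly the interaction you spell out (winningness of $\strat$ supplies a move at a $P$-vertex, which welcomingness of $\strat'$ must accept since that vertex is an opponent for $(U, P')$). Your version merely phrases this via maximal plays and contradiction rather than invoking the totality lemma directly.
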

\begin{proof}
  It is a strategy because it is obviously prefix-closed, accepts all
  moves by $O$. Furthermore, it is total, hence winning by the
  previous lemma.
\qed\end{proof}

\begin{proof}[of Theorem~\ref{thm:cut}]
  Assume given two \GLLTN{}-valid sequents $\Gam, A$ and $A^\bot, \Del$, with
  associated base positions $U$ and $V$, and with $\Gam = (A_1, \ldots,
  A_n)$ and $\Del = (B_1, \ldots, B_m)$. Then, choose as a base for
  $\Gam, \Del$ the position
  \begin{center}
    \begin{hs}
      \node[draw] (U) at (0,0) {$U$} ; %
      \node[draw] (V) at (2,0) {$V$} ; %
      \arrow{U}{V}{$A$} %
    \end{hs}\ .    
  \end{center}
  Then, for any tuple $(U_1, \ldots, U_n, V_1, \ldots, V_m)$, 
  consider the corresponding position
  \begin{equation}
    \begin{hs}
      \node[draw] (U) at (0,0) {$U$} ; %
      \node[draw] (V) at (3,0) {$V$} ; %
      \arrow{U}{V}{$A$} %
      \node[draw] (U1) at (-2,0) {$U_1$} ; %
      \node[rotate=-30] (dots) at (-.8,-.5) {$\dots$} ; %
      \node[draw] (Un) at (0,-1.2) {$U_n$} ; %
      \arrow{U}{U1}{$A_1$} %
      \arrowr{U}{Un}{$A_n$} %
      \node[draw] (V1) at (3,-1.2) {$V_1$} ; %
      \node[rotate=30] (dots) at (3.8,-.5) {$\dots$} ; %
      \node[draw] (Vm) at (5,0) {$V_m$} ; %
      \arrowl{V}{V1}{$B_1$} ; %
      \arrow{V}{Vm}{$B_m$} ; %
    \end{hs}~.\label{eq:bigpos} 
  \end{equation}
  The \GLLTN{}-validity of $\Gam, A$ induces a winning strategy $S$
  for the vertices of $U$ in~\eqref{eq:bigpos}, and that of $A^\bot,
  \Del$ induces a winning strategy $S'$ for the vertices of $V$
  in~\eqref{eq:bigpos}.  This yields the winning strategy $S \cap S'$
  for $U \cup V$ on~\eqref{eq:bigpos}. Hence $\Gam, \Del$ is
  \GLLTN{}-valid.
\end{proof}

\bibliographystyle{plain}
\bibliography{b}

\end{document}